\newtheorem{theorem}{Theorem}
\newtheorem{proposition}{Proposition}
\newtheorem{corollary}{Corollary}
\newtheorem{lemma}{Lemma}
\newtheorem{remark}{Remark}
\newtheorem{definition}{Definition}
\newcommand{\C}{\mathcal{C}} 
\newcommand{\F}{\mathbb{F}} 
\newcommand{\poly}[2]{#1[X]_{< #2}}
\newcommand{\floor}[1]{\left\lfloor {#1} \right\rfloor}
\newcommand{\inner}[2]{\left\langle #1,#2 \right\rangle}
\newcommand{\myproof}{proof}
\newcommand{\eqdef}{:=}
\newcommand{\matfqs}{M_{\ell\times\ell}(\F_{q^s})}
\newcommand{\LRS}{\mathcal{L}}
\newcommand{\RRS}{\mathcal{R}}
\DeclareMathOperator{\pr}{pr}
\DeclareMathOperator{\qbch}{Q-BCH}
\DeclareMathOperator{\GL}{GL}
\DeclareMathOperator{\blockw}{Block-w}
\title{On the decoding of quasi-BCH codes}
\author{Morgan Barbier\thanks{ 
  Universit\'{e} de Caen - GREYC,
  Boulevard Mar\'{e}chal Juin,
  BP 5186, 14032 Caen}\\
  \texttt{morgan.barbier@unicaen.fr}
  \and 
  Cl\'{e}ment Pernet\thanks{
  Univ. Joseph Fourier
  INRIA/LIG-MOAIS,
  51 avenue Jean Kuntzmann,
  38330 Montbonnot}\\
  \texttt{clement.pernet@imag.fr}
  \and
  Guillaume Quintin\thanks{
  \'{E}cole polytechnique -
  LIX, 
  91128 Palaiseau Cedex}\\
  \texttt{quintin@lix.polytechnique.fr}}
\date{2012/12/21}
\begin{document}
\sloppypar
\maketitle
\begin{abstract}
  In this paper we investigate the structure of quasi-BCH
codes.
In the first part of this paper we show that quasi-BCH codes
can be derived from
Reed-Solomon codes over square matrices extending the known
relation about classical BCH and Reed-Solomon codes. This
allows us to adapt the Welch-Berlekamp algorithm to quasi-BCH
codes.
In the second part  of this paper we show that quasi-BCH codes
can be seen as subcodes of interleaved Reed-Solomon codes over
finite fields.
This provides another approach for decoding quasi-BCH
codes.

\end{abstract}
\paragraph{keywords:} Quasi-cyclic code, quasi-BCH code, BCH code,
Reed-Solomon, interleaved code
\section{Introduction}
Many codes with best known minimum distances are quasi-cyclic codes
or derived from them \cite{LinSol2003,codetables}. This family of
codes is therefore very interesting. Quasi-cyclic codes were 
studied and applied in the context of McEliece's cryptosystem
\cite{McEli78,BerCayGabOtm2009} and Niederreiter's
\cite{Nied86,LiDengWang94}.
They permit to reduce the size of keys in opposition to Goppa
codes. However, since the decoding of random quasi-cyclic codes is
difficult, only quasi-cyclic alternant codes were proposed for the
latter cryptosystem. The high structure of alternant codes is
actually a weakness and two cryptanalysis were proposed in
\cite{FauOtmPerTil2010,UmaLea2010}

\subsection{Our contributions}
In this paper we investigate the structure of quasi-BCH
codes.
In the first part of this paper we show that quasi-BCH codes
can be derived from Reed-Solomon codes over square matrices.
It is well known that BCH codes can be obtained from
Reed-Solomon codes \cite[Theorem~2, page 300]{SloMacWil86}.
We extend this property to quasi-BCH codes which allows us
to adapt the Welch-Berlekamp algorithm to quasi-BCH codes.
\begin{theorem}
  Let $\Gamma \in \matfqs$ be a primitive
  $m$-th root of unity and $\C = \qbch_q(m,\ell,\delta,\Gamma)$.
  Then there exists a RRS code $\RRS$ over the ring
  $\matfqs$ with parameters
  $\left[ n,n - \delta + 1 \right]_{\matfqs}$
  and a $\F_q$-linear, $F_q$-isometric embedding
  $\psi : \C \rightarrow \RRS$.
\end{theorem}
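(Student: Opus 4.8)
The plan is to construct the embedding $\psi$ explicitly and then verify, in order, that it carries $\C$ into an RRS code with the stated parameters, that it is $\F_q$-linear, and that it preserves Hamming weight. Throughout I view a quasi-BCH codeword as a tuple $(c_0,\dots,c_{n-1})$ of $n=m$ blocks $c_i\in\F_q^\ell$, membership in $\C$ being governed by the parity-check relations $\sum_{i=0}^{n-1}\Gamma^{ij}c_i=0$ for $1\le j\le\delta-1$, read over $\F_{q^s}$. The code $\RRS$ over $\matfqs$ of length $n$ and dimension $n-\delta+1$ I realize as the evaluation code whose codewords are $(P(\Gamma^0),\dots,P(\Gamma^{n-1}))$ for matrix polynomials $P$ of degree $<n-\delta+1$, equivalently as the submodule cut out by the $\delta-1$ parity checks dual to these evaluations.

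First I would define $\psi$ block by block: fix the $\F_q$-linear map $\iota\colon\F_q^\ell\to\matfqs$ sending a block to the matrix having that block as its first row and zeros elsewhere, and set $\psi(c_0,\dots,c_{n-1})=(\iota(c_0),\dots,\iota(c_{n-1}))$. Since $\iota(c_i)$ is supported on a single row, each matrix product entering an RRS parity check is again supported on that row and reproduces there exactly the corresponding vector parity expression for $\C$; summing over $i$ then shows that $\psi(c)$ satisfies the defining relations of $\RRS$ if and only if $c$ satisfies those of $\C$. Hence $\psi$ maps $\C$ into $\RRS$.

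The $\F_q$-linearity of $\psi$ is immediate, as it is the coordinatewise application of the $\F_q$-linear map $\iota$. For the isometry I would measure weight at the level of $\F_q$-coordinates on both sides: $\iota$ sends the $\ell$ entries of a block into a single row without altering their values, so the block $c_i$ and its image $\iota(c_i)$ have the same number of nonzero $\F_q$-entries and $\iota(c_i)=0$ exactly when $c_i=0$. Consequently $\psi$ preserves the $\F_q$-Hamming weight of a codeword, and being value-preserving it is in particular injective, so $\psi$ is an $\F_q$-isometric embedding.

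The main obstacle is twofold. First one must secure the dimension: showing that $\RRS$ is free of rank $n-\delta+1$ over $\matfqs$ amounts to injectivity of the block Vandermonde evaluation map, which rests on the differences $\Gamma^i-\Gamma^{i'}$, equivalently the matrices $\Gamma^k-\id$ for $1\le k\le m-1$, being invertible; this invertibility is precisely what the primitivity of $\Gamma$ guarantees and is the ring analogue of the distinct-evaluation-points hypothesis that yields MDS-type parameters over a field. Second, because $\matfqs$ is noncommutative one must keep the sidedness coherent---left multiplication by $\Gamma^{ij}$ acting on the column blocks of $\C$ versus the multiplication governing the matrix symbols of $\RRS$---so that the two families of relations genuinely coincide; the row-support of $\iota$ is chosen exactly to reconcile these two conventions, up to the harmless transpose that interchanges left action on columns with right action on rows.
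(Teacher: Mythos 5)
Your overall strategy is the same as the paper's: embed each length-$\ell$ block of a codeword into a matrix of $\matfqs$ supported on a single line, and check that the quasi-BCH parity-check relations become exactly the relations cutting out the RRS code (the right dual of the LGRS code generated by the block Vandermonde parity-check matrix). However, your choice of \emph{row} support breaks the central computation. By Definition~\ref{defi:qbch} and Proposition~\ref{prop:dual}, the relations defining both $\C$ and the RRS code are of the form $\sum_{j}\Gamma^{ij}x_{j+1}=0$, i.e.\ they act by \emph{left} multiplication by powers of $\Gamma$. If $M=\iota(c)$ has $c$ in its first row and zeros elsewhere, then $M=e_1c$ (with $e_1$ the first standard basis column vector and $c$ a row vector), so $\Gamma^{ij}M=(\Gamma^{ij}e_1)c$ is a rank-one matrix whose nonzero entries are spread over \emph{all} rows. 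Your claim that ``each matrix product entering an RRS parity check is again supported on that row and reproduces there exactly the corresponding vector parity expression'' is therefore false as stated. Chasing your convention through, $\sum_j M_j\Gamma^{ij}=0$ (right multiplication, which \emph{does} preserve row support) is equivalent to $\sum_j(\Gamma^T)^{ij}c_{j+1}^T=0$, so you land either in a code attached to $\Gamma^T$ rather than $\Gamma$, or in a \emph{left} Reed-Solomon code rather than the RRS code the theorem asks for; your phrase ``up to the harmless transpose'' names the problem but does not resolve it, and your realization of $\RRS$ via left evaluations $P(\Gamma^j)$ already has the wrong sidedness per Definition~\ref{def:eval}.

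The fix is a one-line change and is exactly what the paper does: put the block $(c_{i1},\dots,c_{i\ell})^T$ into the first \emph{column} of the matrix. Then $\Gamma^{ij}M_{j+1}$ has first column $\Gamma^{ij}(c_{j+1})^T$ and all other columns zero, so $\sum_j\Gamma^{ij}M_{j+1}=0$ holds if and only if the quasi-BCH parity checks hold, and $\psi(\C)$ lands in the right dual of the LGRS code, which Proposition~\ref{prop:dual} identifies as the RRS code with parameters $[n,n-\delta+1]_{\matfqs}$. Your remaining points (that freeness of rank $n-\delta+1$ rests on the invertibility of $\Gamma^k-I_\ell$ guaranteed by primitivity, and that $\F_q$-linearity and preservation of the block weight are immediate from the coordinatewise, value-preserving nature of $\iota$) are correct and match Proposition~\ref{prop:rs} and the paper's concluding remark.
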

In the second part we show that quasi-BCH codes can be seen
as subcodes of interleaved Reed-Solomon codes.
\begin{theorem}
  The quasi-BCH code $\C$ over $\F_q$ is an interleaved code
  of $\ell$ subcodes of Reed-Solomon codes over $\F_{q^{s'}}$
  in the following sense:
  there exists $\ell$ Reed-Solomon codes $\C_1,\dots,\C_{\ell}$ over
  $\F_q$ and an isometric isomorphism from $\C$, equipped with the
  $\ell$-block distance, to a subcode of the interleaved code with
  respect to $\C_1,\dots,\C_{\ell}$.
\end{theorem}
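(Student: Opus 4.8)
The plan is to linearise the matrix $\Gamma$, thereby unfolding the matrix structure underlying the first theorem into $\ell$ scalar Reed-Solomon structures. Since $\Gamma \in \matfqs$ is a primitive $m$-th root of unity, we have $\Gamma^m = I$ with $m$ prime to the characteristic, so the minimal polynomial of $\Gamma$ divides the separable polynomial $X^m - 1$; consequently $\Gamma$ is diagonalisable over a finite extension $\F_{q^{s'}}$ of $\F_{q^s}$, namely the splitting field of its characteristic polynomial. I would write $\Gamma = P D P^{-1}$ with $P \in \GL_{\ell}(\F_{q^{s'}})$ and $D = \diag(\lambda_1, \dots, \lambda_{\ell})$, where each eigenvalue $\lambda_k$ is an $m$-th root of unity in $\F_{q^{s'}}$ and, by primitivity of $\Gamma$, $\operatorname{lcm}_k \operatorname{ord}(\lambda_k) = m$. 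This conjugation is the whole engine of the proof.

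Next I would write a codeword of $\C$, through its image in the ring Reed-Solomon code $\RRS$ supplied by the first theorem, as a sequence of blocks $v = (v_0, \dots, v_{n-1})$ with each $v_i \in \F_q^{\ell}$ regarded as a column vector, the defining constraints being the $\delta - 1$ matrix syndromes $\sum_i \Gamma^{ij} v_i = 0$ for $1 \le j \le \delta - 1$. Substituting $\Gamma^{ij} = P D^{ij} P^{-1}$ and setting $w_i = P^{-1} v_i \in \F_{q^{s'}}^{\ell}$, these become $\sum_i D^{ij} w_i = 0$; because $D$ is diagonal they split coordinate by coordinate into $\sum_i \lambda_k^{ij} (w_i)_k = 0$ for every $k \in \{1, \dots, \ell\}$ and every $j \in \{1, \dots, \delta-1\}$. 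For each fixed $k$ these are exactly the $\delta - 1$ parity checks of a Reed-Solomon code $\C_k$ over $\F_{q^{s'}}$ of designed distance $\delta$ with defining root $\lambda_k$, imposed on the row $c^{(k)} = ((w_0)_k, \dots, (w_{n-1})_k)$.

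This defines an $\F_q$-linear map $\phi$ sending $v$ to the $\ell \times n$ array whose rows are $c^{(1)}, \dots, c^{(\ell)}$; it is injective because $P$ is invertible, and its image lies in the interleaved code of $\C_1, \dots, \C_{\ell}$. The key point for the metric statement is that the $i$-th column of $\phi(v)$ is precisely $w_i = P^{-1} v_i$, so that column vanishes if and only if the block $v_i$ vanishes; hence $\phi$ carries the $\ell$-block weight of $\C$ to the block weight of the interleaved code and is an isometry. Finally, since each $v_i$ ranges only over $\F_q^{\ell}$, each column $w_i$ is constrained to the $\F_q$-subspace $P^{-1}\F_q^{\ell}$ of $\F_{q^{s'}}^{\ell}$; this coupling between the rows is exactly what makes the image a proper \emph{subcode} of the interleaved code rather than the whole of it.

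The step demanding the most care is the simultaneous bookkeeping over the two fields $\F_q$ and $\F_{q^{s'}}$: the diagonalisation, and hence $\phi$, is defined only over the extension, yet one must certify that $\phi$ is at once $\F_q$-linear, isometric for the $\ell$-block distance, and lands in an honest interleaving of Reed-Solomon codes. A secondary subtlety is that the eigenvalues $\lambda_k$ need be neither distinct nor of full order $m$; I would check that each still defines a genuine Reed-Solomon code of designed distance $\delta$ (the $\delta - 1$ consecutive powers $\lambda_k, \dots, \lambda_k^{\delta-1}$ furnish the required parity checks irrespective of $\operatorname{ord}(\lambda_k)$), and that $s'$ is correctly fixed as the degree over $\F_q$ of the splitting field of the characteristic polynomial of $\Gamma$.
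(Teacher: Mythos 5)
Your overall route is the same as the paper's: diagonalise $\Gamma = PDP^{-1}$ over the splitting field $\F_{q^{s'}}$ of $X^m-1$, conjugate the syndrome equations $\sum_i \Gamma^{ij} v_i = 0$ into $\sum_i \lambda_k^{ij}(w_i)_k = 0$ with $w_i = P^{-1}v_i$, and observe that this blockwise change of basis is an $\ell$-block isometry whose image sits inside the interleaved code, cut out by the $\F_q$-rationality constraint. Your explicit justification of the isometry (the column $w_i$ vanishes if and only if the block $v_i$ does, because $P^{-1}$ is invertible) is a point the paper dismisses as ``straightforward,'' so that part is fine and even slightly more careful than the original.

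The one genuine slip is in what you extract from the primitivity of $\Gamma$. You claim only that $\operatorname{lcm}_k \operatorname{ord}(\lambda_k) = m$, and then propose to accommodate eigenvalues of order $d < m$ by asserting that the $\delta-1$ consecutive powers still ``furnish the required parity checks irrespective of $\operatorname{ord}(\lambda_k)$.'' That fallback is false: if $\operatorname{ord}(\lambda_k) = d < m$, the parity-check matrix with rows $(\lambda_k^{ij})_{i}$ has identical columns $i$ and $i+d$, so the code it defines contains weight-$2$ words and is neither a Reed--Solomon code of length $m$ nor of minimum distance at least $\delta$ --- the theorem's conclusion would fail for that row. The correct deduction, which is exactly the paper's Lemma~\ref{lem:gamma_diag}, is stronger than your lcm statement: primitivity means $\Gamma^i - I_\ell$ is invertible for every $0 < i < m$, hence no eigenvalue can satisfy $\lambda_k^i = 1$ for such $i$, i.e.\ \emph{every} $\lambda_k$ is itself a primitive $m$-th root of unity, and the degenerate case you try to wave away never occurs. (Relatedly, the diagonalisability you assert needs $X^m-1$ separable, i.e.\ $\gcd(m,q)=1$; this too follows from primitivity, since $p \mid m$ would make $\Gamma^{m/p} - I_\ell$ nilpotent and hence non-invertible.) With that lemma supplied, your argument closes and coincides with the paper's.
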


\subsection{Related work}
In \cite{LalFitz2001,LinSol2001}, $\ell$-quasi-cyclic codes of length
$m\ell$ are seen as $R$-submodules of $R^\ell$ for a certain ring
$R$. However, in \cite{LalFitz2001}, Gr\"{o}bner bases are used in order
to describe polynomial generators of quasi-cyclic codes whereas
in \cite{LinSol2001}, the authors decompose quasi-cyclic codes as
direct sums of shorter linear codes over various extensions of $\F_q$
(when $\gcd(m,q)=1$). This last work leads to an interesting  trace
representation of quasi-cyclic codes. In \cite{CayChaAbd2010}, the
approach is more analogous to the cyclic
case. The authors consider the factorization of
$X^m - 1 \in M_\ell(F_q)[X]$ with reversible polynomials in order to
construct $\ell$-quasi-cyclic codes canceled by those polynomials and
called $\Omega(P)$-codes. This leads to the
construction of self-dual codes and codes beating known bounds.
But the factorization of univariate polynomials over a matrix ring
remains difficult. In \cite{Chabot2011} the author gives an
improved method for particular cases of the latter factorization
problem.

\section{Prerequisites}
\subsection{Reed-Solomon codes over rings}
We recall some basic definitions of Reed-Solomon codes over rings in
this section. We let $A$ be a ring with identity, we denote by
$A^{\times}$ the \emph{group of units} of $A$ and by $Z(A)$ the
\emph{center} of $A$, the commutative subring of $A$ consisting of
all the elements of $A$ which commutes with all the other elements
of $A$. We denote by $A[X]$ the ring of polynomials over $A$ and
by $\poly{A}{k}$ the polynomials over $A$ of degree at most $k - 1$.
\begin{definition}
\label{def:eval}
  Let
  \begin{equation*}
    f = \sum_{i = 0}^d f_i X^i \in A[X]
  \end{equation*}
  be a polynomial with coefficients in $A$ and
  $a \in A$. We call \emph{left evaluation of $f$ at $a$} the
  quantity
  \begin{equation*}
    f(a) \eqdef \sum_{i = 0}^d f_i a^i \in A
  \end{equation*}
  and \emph{right evaluation of $f$ at $a$} the quantity
  \begin{equation*}
    (a)f \eqdef \sum_{i = 0}^d a^i f_i \in A.
  \end{equation*}
\end{definition}
\begin{remark}
  For $f,g \in A[X]$ and $a \in A$, we obviously have
  $f(a) = (a)f$ whenever $a \in Z(A)$,
  $(f + g)(a) = f(a) + g(a)$,
  $(a)(f + g) = (a)f + (a)g$. If $a$ commutes with all
  the coefficients of $g$ we also have
  $(fg)(a) = f(a)g(a)$ and $(a)(gf) = (a)g(a)f$.
\end{remark}
\begin{definition}
  Let $0 < k \leq n$ be two integers.
  Let $(x_1,\dots,x_n)$ and $v = (v_1,\dots,v_n)$
  be two vectors of $A^n$ be such that
  $x_i - x_j \in A^{\times}$ and $x_i x_j = x_j x_i$
  for all $i \neq j$ and $v_i \in A^{\times}$ for all $i$.

  The \emph{left} submodule of $A^n$ generated by the vectors
  \begin{equation*}
    (f(x_1) \cdot v_1,\dots,f(x_n) \cdot v_n) \in A^n
    \text{ with }
    f \in \poly{A}{k}
  \end{equation*}
  is called a \emph{left generalized Reed-Solomon code (LGRS) over
  $A$ with parameters $[v,x,k]_A$} or $[n,k]$ if there is no confusion
  on $x$ and $v$.

  The \emph{right} submodule of $A^n$ generated by the vectors
  \begin{equation*}
    (v_1 \cdot (x_1)f,\dots,v_n \cdot (x_n)f) \in A^n
    \text{ with }
    f \in \poly{A}{k}
  \end{equation*}
  is called a \emph{right generalized Reed-Solomon code (RGRS) over
  $A$ with parameters $[v,x,k]_A$} or $[n,k]$ if there is no confusion
  on $x$ and $v$. The vector $x$ is called the \emph{support} of the
  code. If $v = (1,\dots,1)$, the codes constructed above are called
  left Reed-Solomon (LRS) and right Reed-Solomon (RRS) codes.
\end{definition}
\begin{definition}
  Let $x = (x_1,\dots,x_n) \in A^n$. We call the \emph{Hamming
  weight of $x$} the number of nonzero coordinates.
  \begin{equation*}
    w(x) \eqdef w(x_1,\dots,x_n) =
    \left| \{ i : x_i \neq 0 \} \right|.
  \end{equation*}
  Let $y = (y_1,\dots,y_n) \in A^n$.
  The \emph{Hamming distance between $x$ and $y$} is
  \begin{equation*}
    d(x,y) = w(x - y) = \left| \{ i : x_i \neq x_j \} \right|.
  \end{equation*}
  The \emph{minimum distance} of any subset $S \subseteq A^n$ is
  defined as
  \begin{equation*}
    \min \left\{ d(x,y) : x,y \in S \text{ and } x \neq y \right\}.
  \end{equation*}
\end{definition}
\begin{proposition}
\label{prop:rs}
  A LGRS (resp. RGRS) code is a free left (resp. right) submodule
  of $A^n$. A LGRS (resp. RGRS) code with parameters $[n,k]$ has
  minimum distance $n - k + 1$.
\end{proposition}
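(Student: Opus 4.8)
The plan is to realize both statements through the left evaluation map together with a factor theorem adapted to the noncommutative setting. Consider the map $\phi : \poly{A}{k} \to A^n$ sending $f$ to $(f(x_1)\cdot v_1,\dots,f(x_n)\cdot v_n)$; its image is exactly the LGRS code. First I would check that $\phi$ is left $A$-linear: since the $i$-th coefficient of $af$ is $a f_i$, one has $(af)(x_j) = a\,f(x_j)$, hence $\phi(af) = a\,\phi(f)$. Because $\poly{A}{k}$ is free of rank $k$ with basis $1,X,\dots,X^{k-1}$, proving that $\phi$ is injective will at once give freeness of rank $k$ for the code and the control we need over its minimum distance.

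The technical heart is a remainder theorem for left evaluation. Dividing $f$ on the right by $X - a$ gives $f = q\cdot(X-a) + r$ with $r \in A$; using the product rule $(gh)(a) = g(a)h(a)$, which is legitimate here because $a$ commutes with the coefficients $1,-a$ of $X-a$, one obtains $f(a) = r$. Thus $f(a) = 0$ if and only if $X-a$ is a right factor of $f$. I would then prove a root-counting lemma by peeling roots off one at a time. Writing $f = q_1\cdot(X-x_{i_1})$, evaluate at $x_{i_2}$: since $x_{i_2}$ commutes with $x_{i_1}$ the product rule applies and yields $0 = f(x_{i_2}) = q_1(x_{i_2})\cdot(x_{i_2}-x_{i_1})$, and as $x_{i_2}-x_{i_1}\in A^{\times}$ we cancel to get $q_1(x_{i_2}) = 0$. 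Iterating, a nonzero $f$ vanishing at $t$ distinct evaluation points is right-divisible by a monic polynomial of degree $t$, so $\deg f \geq t$.

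With this lemma both claims follow. For injectivity, if $\phi(f) = 0$ then $f(x_j) = 0$ for all $j$ (the $v_j$ being units), so $f$ has $n$ roots; since $\deg f < k \leq n$ this forces $f = 0$, and the code is free of rank $k$. For the minimum distance, a nonzero codeword $\phi(f)$ has weight $n$ minus the number of $j$ with $f(x_j) = 0$, and the lemma bounds that root count by $\deg f \leq k-1$, so every nonzero codeword has weight at least $n-k+1$. To see the bound is attained, take $f = (X-x_{k-1})\cdots(X-x_1)$ of degree $k-1$: repeated use of the product rule (all $x_j$ commuting) shows $f(x_j) = 0$ for $j = 1,\dots,k-1$, because the factor $x_j - x_j = 0$ occurs, so $\phi(f)$ has weight exactly $n-k+1$. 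The RGRS statement follows by the symmetric argument with left and right systematically interchanged.

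The step I expect to require the most care is the root-counting lemma, since the product rule only holds when the evaluation point commutes with the coefficients of the right-hand factor; the hypotheses $x_i x_j = x_j x_i$ and $x_i - x_j \in A^{\times}$ are exactly what make the peeling and the cancellation valid, and they must be invoked afresh at every iteration rather than once.
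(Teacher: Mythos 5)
Your proof is correct, but it takes a substantially more self-contained route than the paper. The paper's entire proof is a one-line reduction: the diagonal maps $(a_1,\dots,a_n)\mapsto(a_1v_1,\dots,a_nv_n)$ and $(a_1,\dots,a_n)\mapsto(v_1a_1,\dots,v_na_n)$ are left/right isometric automorphisms of $A^n$ because the $v_i$ are units, so the generalized code is isometric to the plain RS code with the same support; the freeness and minimum distance of the latter are left implicit (effectively outsourced to known results on Reed--Solomon codes over rings). You instead prove the underlying facts directly: left-linearity of the evaluation map, the factor theorem for left evaluation via right division by $X-a$, and the root-counting lemma obtained by peeling off monic linear factors, invoking at each step exactly the hypotheses $x_ix_j=x_jx_i$ (to apply the product rule $(fg)(a)=f(a)g(a)$ when $a$ commutes with the coefficients of $g$) and $x_i-x_j\in A^\times$ (to cancel). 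This yields injectivity of the evaluation map, hence freeness of rank $k$, the Singleton-type lower bound on weights, and a witness codeword attaining weight $n-k+1$. What the paper's argument buys is brevity and the clean observation that weights and distances are unaffected by the column multipliers; what yours buys is an actual proof of the $v=(1,\dots,1)$ case, which the paper never supplies, and an explicit identification of where the support hypotheses are used. One small point worth keeping in your write-up: when you peel the $j$-th root you should note that the accumulated monic factor $(X-x_{i_j})\cdots(X-x_{i_1})$ evaluates at $x_{i_{j+1}}$ to a product of units, or equivalently cancel one linear factor at a time as you do; either variant is fine, and the monicity of the right factor is what guarantees $\deg f\geq t$ even over a ring with zero divisors.
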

\begin{proof}
  It suffices to see that the maps
  \begin{equation*}
    \begin{array}{rcl}
      A^n             & \longrightarrow & A^n \\
      (a_1,\dots,a_n) & \longmapsto     & (a_1 v_1,\dots,a_n v_n) \\
      (a_1,\dots,a_n) & \longmapsto     & (v_1 a_1,\dots,v_n a_n)
    \end{array}
  \end{equation*}
  are respectively left and right isometric automorphisms of $A^n$.
\end{proof}

\subsection{Quasi cyclic and quasi BCH codes}
Quasi cyclic codes form an important family of codes defined
as follow.
\begin{definition}
  Let $T:\F_q^n \rightarrow \F_q^n$ to be the left cyclic shift
  defined by
  \begin{equation*}
    T(c_1, c_2, \dots, c_n) = (c_2,c_3,\dots,c_1).
  \end{equation*}
  We call
  \emph{$\ell$-quasi-cyclic code over $\F_q$ of length $n$}
  any code of length $n$ over $\F_q$ stable by $T^{\ell}$.
  If the context is clear we will
  simply say \emph{$\ell$-quasi-cyclic code}.
\end{definition}
We will focus in this paper on quasi-BCH codes which form
a subfamily of quasi-cyclic codes. They can be seen as a
generalization of BCH codes in the context of quasi-cyclic
codes. For we need primitive roots of unity defined in
a extension of $\F_q$, say $\F_{q^s}$ to construct BCH
codes over $\F_q$.
\begin{proposition}
\label{prop:prim_root}
  Then there exists a primitive $q^{s\ell} - 1$-th root of unity in
  $M_{\ell}(\F_{q^s})$.
\end{proposition}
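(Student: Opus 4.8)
The plan is to realize the desired root of unity via the regular representation of the field extension $\F_{q^{s\ell}}$ over $\F_{q^s}$ --- in group-theoretic language, to exhibit a Singer cycle inside $\GL_{\ell}(\F_{q^s})$. The key observation is that the matrix ring $M_{\ell}(\F_{q^s})$ contains a copy of the field $\F_{q^{s\ell}}$, whose multiplicative group is cyclic of order exactly $q^{s\ell} - 1$; a generator of that cyclic group will be the element we want.

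Concretely, I would proceed as follows. First, recall that $\F_{q^{s\ell}}$ is an extension of $\F_{q^s}$ of degree $\ell$, hence an $\ell$-dimensional $\F_{q^s}$-vector space. After fixing an $\F_{q^s}$-basis, left multiplication by an element $a \in \F_{q^{s\ell}}$ is an $\F_{q^s}$-linear endomorphism, so the regular representation yields a ring homomorphism $\rho : \F_{q^{s\ell}} \to \mathrm{End}_{\F_{q^s}}(\F_{q^{s\ell}}) \cong M_{\ell}(\F_{q^s})$ that sends $1$ to the identity matrix. Second, I observe that $\rho$ is injective: its kernel is an ideal of the field $\F_{q^{s\ell}}$ that does not contain $1$, hence is $\{0\}$. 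Third, I choose $a$ to be a generator of the cyclic group $\F_{q^{s\ell}}^{\times}$, which has order $q^{s\ell} - 1$, and set $\Gamma \eqdef \rho(a)$. Equivalently, one may take $\Gamma$ to be the companion matrix of a primitive polynomial of degree $\ell$ over $\F_{q^s}$, acting on $\F_{q^s}[X]/(p) \cong \F_{q^{s\ell}}$.

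It then remains to verify that $\Gamma$ has multiplicative order exactly $q^{s\ell} - 1$, i.e. that it is a primitive $(q^{s\ell}-1)$-th root of unity. Since $\rho$ is a ring homomorphism with $\rho(1) = \id$, we have $\Gamma^d = \rho(a^d)$ for every $d \geq 0$, and injectivity gives $\Gamma^d = \id$ if and only if $a^d = 1$, that is, if and only if $(q^{s\ell}-1) \mid d$. This pins the order of $\Gamma$ down to $q^{s\ell} - 1$. The only genuinely delicate point is this transfer of multiplicative order through $\rho$; everything else (the degree of the extension, the cyclicity of the multiplicative group, and the injectivity of the regular representation) is standard finite-field theory. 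I would also make explicit the convention, implicit in the statement, that a primitive $N$-th root of unity in the noncommutative ring $M_{\ell}(\F_{q^s})$ means an element of the group of units $\GL_{\ell}(\F_{q^s})$ of multiplicative order $N$, so that the claim matches exactly the element $\Gamma$ produced above.
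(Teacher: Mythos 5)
Your construction is the right one and, as far as one can tell, essentially the one the paper delegates to its reference: the paper gives no proof of its own beyond citing an earlier work, and the standard argument is exactly the regular representation of $\F_{q^{s\ell}}$ over $\F_{q^s}$ (equivalently, the companion matrix of a primitive polynomial of degree $\ell$ over $\F_{q^s}$). The embedding $\rho$, the choice of a generator $a$ of the cyclic group $\F_{q^{s\ell}}^{\times}$, and the transfer of the multiplicative order through the injective ring homomorphism are all exactly what is needed.

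There is, however, one mismatch you should repair. The convention you announce at the end --- that a primitive $N$-th root of unity in $M_{\ell}(\F_{q^s})$ means a unit of multiplicative order $N$ --- is not the definition this paper works with. The paper defines $a \in A$ to be a primitive $m$-th root of unity when $a^m = 1$ and $a^i - 1 \in A^{\times}$ for all $0 < i < m$, and this stronger condition is what is actually used downstream (for instance, the identity $\sum_i \gamma^{(i+j+1)\ell} = \bigl(1 - (\gamma^{i+j+1})^m\bigr)/\bigl(1 - \gamma^{i+j+1}\bigr) = 0$ in the proof of Proposition~\ref{prop:dual} requires $1 - \gamma^{i+j+1}$ to be invertible). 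In a matrix ring, having order exactly $N$ does not imply this: a block-diagonal matrix whose blocks have multiplicative orders $3$ and $5$ has order $15$, yet its cube minus the identity is singular. Your $\Gamma$ does satisfy the stronger property, and the verification is one more line of the same argument you already use: for $0 < i < q^{s\ell} - 1$ one has $\Gamma^i - I_{\ell} = \rho(a^i - 1)$ with $a^i - 1$ a nonzero element of the field $\F_{q^{s\ell}}$, so $\rho\bigl((a^i - 1)^{-1}\bigr)$ is a two-sided inverse of $\Gamma^i - I_{\ell}$ in $M_{\ell}(\F_{q^s})$. Add that line and the proof is complete.
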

\begin{proof}
  The proof can be found in
  \cite[Proposition~16, page~911]{BCQ2011}.
\end{proof}
\begin{definition}
\label{defi:qbch}
  Let $\Gamma$ be a primitive $m$-th root of unity in
  $M_{\ell}(\F_{q^s})$ and $\delta \leq m$.
  We define the $\ell$-quasi-BCH code of length $m\ell$,
  with respect to $\Gamma$, with designed minimum distance
  $\delta$, over $\F_q$ by
  \begin{multline*}
    \qbch_q(m,\ell,\delta,\Gamma) \eqdef \\
    \left\lbrace
      (c_1,\ldots,c_m) \in (\F_q^\ell)^m :
        \sum_{j = 0}^{m - 1} (\Gamma^i)^j (c_{j+1})^T = 0
        \text{ for } i = 1,\ldots,\delta - 1
    \right\rbrace.
  \end{multline*}
  Note that $\qbch_q(m,\ell,\delta,\Gamma)$ is a quasi-cyclic code.
\end{definition}
\begin{definition}
  The \emph{$\ell$-block weight} of
  $(x_{11},\dots,x_{1\ell},\dots,x_{m1},\dots,x_{m\ell})
   \in \F_q^{m\ell}$ is defined to be
  \begin{equation*}
    \blockw_{\ell}(x) \eqdef
    \left|
      \left\{ i : (x_{i1},\dots,x_{i\ell}) \neq 0 \right\}
    \right|.
  \end{equation*}
  The \emph{$\ell$-block distance} between $x,y \in \F_q^{m\ell}$
  is defined to be $\blockw_{\ell}(x - y)$.
\end{definition}

\section{Reed-Solomon codes and quasi-BCH codes}
\subsection{The relation between quasi-BCH and Reed-Solomon codes}
We show in this section that under certain assumptions on the
support of Reed-Solomon codes, the dual of a LRS code is a
RRS code. From this fact we show that quasi-BCH can be constructed
from Reed-Solomon codes over square matrices rings. In this
Subsection we let $A$ designate a finite ring with identity.
\begin{definition}
  Let $x = (x_1,\dots,x_n)$ and $y = (y_1,\dots,y_n)$ be two vectors
  of $A^n$. The \emph{inner product} is defined as
  \begin{equation*}
    \inner{x}{y} \eqdef \sum_{i = 0}^n x_i y_i.
  \end{equation*}
\end{definition}
\begin{remark}
\label{rem:inner}
  Let $S$ be a subset of $A^n$. Then the set
  $\{ x \in A^n : \forall s \in S, \inner{s}{x} = 0 \}$
  denoted by $S^{\perp}$
  is called the \emph{right dual of $S$} and
  is a right submodule of $A^n$. Similarly,
  Let $S$ be a subset of $A^n$. Then the set
  $\{ x \in A^n : \forall s \in S, \inner{x}{s} = 0 \}$
  denoted by ${}^{\perp}S$
  is called the \emph{left dual of $S$} and
  is a left submodule of $A^n$.
  Note that for all $x,y \in A^n$ and $\mu \in A$ we have
  $\mu \inner{x}{y} = \inner{\mu x}{y}$ and
  $\inner{x}{y} \mu = \inner{x}{y \mu}$.
\end{remark}
\begin{definition}
  We say that $a \in A$ is a \emph{primitive $m$-th root of unity} if
  $a^m = 1$ and $\forall 0 \leq i < m, (a^i - 1) \in A^{\times}$.
\end{definition}
\begin{remark}
  Let $x = (1,\gamma,\gamma^2,\dots,\gamma^{m - 1}) \in A^m$ where
  $\gamma$ is a primitive $m$-th root of unity. Then a RRS or LRS
  code whose support is $x$ is cyclic.
\end{remark}
\begin{proposition}
\label{prop:dual}
  Let $\gamma \in A$ be a primitive $m$-th root of unity.
  Let $x = (1,\gamma,\gamma^2,\dots,\gamma^{m - 1}) \in A^n$.
  Then the right (resp. left) dual of the LGRS (resp. RGRS) code with
  parameters
  $\left[ x,x,k \right]_A$ is the RRS (resp. LRS)
  code with parameters $[x,n - k]_A$.
\end{proposition}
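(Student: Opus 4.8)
The plan is to make the generators completely explicit and reduce the whole orthogonality question to a single geometric-series identity for $\gamma$. Taking $n=m$, and writing the support and multiplier as $x_i=v_i=\gamma^{i}$ for $0\le i\le m-1$, the left module $\mathcal{L}$ generated by the rows $(f(x_i)v_i)_i$ with $f\in\poly{A}{k}$ is spanned by the vectors coming from $f=X^{j}$; the $i$-th coordinate of such a vector is $\gamma^{ij}\gamma^{i}=\gamma^{i(j+1)}$. Hence $\mathcal{L}$ is the left span of $e_1,\dots,e_k$, where $e_s\eqdef(1,\gamma^{s},\gamma^{2s},\dots,\gamma^{(m-1)s})$. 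The identical computation with right evaluation shows that the RRS code $\mathcal{R}$ of dimension $m-k$ is the right span of $e_0,\dots,e_{m-k-1}$. Thus everything is phrased in terms of the single family $(e_s)_{0\le s<m}$.

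Second, I would record the key identity: for $s\not\equiv 0\pmod m$ one has $\sum_{i=0}^{m-1}\gamma^{is}=0$, because $(\gamma^{s}-1)\sum_i\gamma^{is}=\gamma^{ms}-1=0$ and $\gamma^{s}-1\in A^{\times}$ by primitivity, while for $s\equiv 0$ the sum equals $m$. Since powers of $\gamma$ commute and add, this gives $\inner{e_r}{e_t}=\sum_i\gamma^{i(r+t)}=m\,\delta_{m\mid r+t}$. As $1\le r\le k$ and $0\le t\le m-k-1$ force $1\le r+t\le m-1$, every such pairing vanishes; using $\mu\inner{x}{y}=\inner{\mu x}{y}$ and $\inner{x}{y}\mu=\inner{x}{y\mu}$ from Remark~\ref{rem:inner} to pass from the generators to arbitrary left- resp.\ right-combinations, this already yields $\mathcal{R}\subseteq\mathcal{L}^{\perp}$.

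Third, for the reverse inclusion I would show that $(e_s)_{0\le s<m}$ is a basis of $A^{m}$ and that $m$ is a unit, and then read off coordinates. The Vandermonde matrix $F=(\gamma^{si})$ has all its entries in the commutative subring $R=\gen{\gamma}\subseteq A$ generated by $\gamma$, so $\det F=\prod_{i<j}(\gamma^{j}-\gamma^{i})$; each factor is $\gamma^{i}(\gamma^{j-i}-1)$ with $\gamma^{j-i}-1\in A^{\times}$, and an element of $R$ that is a unit in $A$ is a non-zero-divisor, hence a unit, in the finite commutative ring $R$, so $\det F\in R^{\times}$ and $F$ is invertible over $A$. Writing an arbitrary $y$ as $y=\sum_t e_t b_t$ in this basis, the identity above gives $\inner{e_r}{y}=m\,b_{m-r}$; taking determinants in $FF^{T}=mP$ (with $P$ the reversal permutation $s\mapsto -s$) shows $m^{m}$, and therefore $m$, is a unit. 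Hence $\inner{e_r}{y}=0$ for $r=1,\dots,k$ forces $b_{m-1}=\dots=b_{m-k}=0$, i.e.\ $y\in\mathcal{R}$, giving $\mathcal{L}^{\perp}\subseteq\mathcal{R}$ and thus equality.

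The main obstacle is the noncommutativity of $A$: one cannot directly invoke a Vandermonde determinant, the orthogonality of a Fourier basis, or a naive cardinality count for the dual. The remedy is to observe that every quantity entering the argument—the entries of $F$, the scalar $m$, and the determinant—lives in the commutative subring $R=\gen{\gamma}$, and to upgrade ``unit in $A$'' to ``unit in $R$'' using finiteness of $R$. One must also keep scalars on the correct side throughout, since $\mathcal{L}$ is a left module, $\mathcal{R}$ a right module, and the pairing is linear only on one side. The left-dual/RGRS statement then follows by the symmetric argument, equivalently by passing to the opposite ring.
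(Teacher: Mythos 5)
Your proof is correct, and the first half coincides with the paper's: you identify the same explicit generators $e_s=(1,\gamma^s,\dots,\gamma^{(m-1)s})$ (the LGRS code spanned on the left by $e_1,\dots,e_k$ because the multiplier equals the support, the RRS code spanned on the right by $e_0,\dots,e_{n-k-1}$), and you obtain one inclusion from the same geometric-series identity $\sum_{i}\gamma^{is}=0$ for $0<s<m$. Where you genuinely diverge is the reverse inclusion. The paper reads $y\in\LRS^{\perp}$ as a linear system whose leading $k\times k$ block is an invertible Vandermonde matrix over the commutative subring $Z(A)[\gamma]$, deduces $|\LRS^{\perp}|=|A|^{n-k}=|\RRS|$, and concludes by a cardinality count. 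You instead invert the full $m\times m$ Fourier matrix $F$, show that $(e_s)_{0\le s<m}$ is a basis of $A^m$ and that $m\in A^{\times}$ (via $\det(FF^{T})=\pm m^{m}$), expand an arbitrary $y$ in this basis and observe that the conditions $\inner{e_r}{y}=0$ for $r=1,\dots,k$ annihilate exactly the coefficients $b_{m-1},\dots,b_{m-k}$. Your route computes the dual directly and dispenses with the counting step; its price is the extra verification that $m$ is a unit, which the paper's $k\times k$ block (whose determinant is a product of units by primitivity alone) avoids. You are also more careful than the paper on a point both arguments need: primitivity makes $\gamma^{j}-\gamma^{i}$ a unit of $A$, but to invert the Vandermonde matrix over the commutative subring one must upgrade this to a unit of that subring, which you do correctly using finiteness (a non-zero-divisor of a finite commutative ring is a unit); the paper asserts the invertibility of $H$ over $Z(A)[\gamma]$ without comment. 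Both proofs ultimately rely on the finiteness of $A$, yours only through this upgrading step, the paper's additionally through the cardinality comparison.
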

\begin{proof}
  We denote respectively by $\LRS$ and $\RRS$ the left
  generalized Reed-Solomon code with parameters
  $[x,x,k]_A$ and the right Reed-Solomon code
  with parameters $[x,n - k]_A$.

  First note that $\LRS$ is generated by the vectors
  \begin{equation*}
    (1,\gamma^i,\gamma^{2i},\dots,\gamma^{(m - 1)i})
    \text{ for }
    i = 1,\dots,k
  \end{equation*}
  and that $\RRS$ is generated by the vectors
  \begin{equation*}
    (1,\gamma^i,\gamma^{2i},\dots,\gamma^{(m - 1)i})
    \text{ for }
    i = 0,\dots,n - k - 1.
  \end{equation*}
  And we have for $0 \leq i + j < n - 1$ in the commutative ring
  $Z(A)[\gamma]$
  \begin{equation*}
    \sum_{i = 0}^{m - 1} \gamma^{(i + 1)\ell} \cdot \gamma^{j\ell} =
    \sum_{i = 0}^{m - 1} \left( \gamma^{i + j + 1} \right)^{\ell} =
    \frac{1 - \left( \gamma^{i + j + 1}\right) ^ m}
         {1 - \gamma^{i + j + 1}} = 0.
  \end{equation*}
  Therefore, by Proposition~\ref{prop:rs} and Remark~\ref{rem:inner},
  $\LRS^{\perp} \subseteq \RRS$ and ${}^{\perp}\RRS \subseteq \LRS$.
  
  Again by Proposition~\ref{prop:rs} and Remark~\ref{rem:inner}
  an element $x \in A^n$ lies in $\LRS^{\perp}$ if and only if
  \begin{equation}
  \label{equ:dual}
    \left[
    \begin{pmatrix}
      1 & 1              & 1                 & \dots & 1 \\
      1 & \gamma         & \gamma^2          & \dots & \gamma^{m - 1} \\
      1 & \vdots         & \vdots            &       & \vdots \\
      1 & \gamma^{k - 1} & \gamma^{2(k - 1)} & \dots &
                                              \gamma^{(k - 1)(m - 1)}
    \end{pmatrix}
    \begin{pmatrix}
      1 &        &        & \\
        & \gamma &        & \\
        &        & \ddots & \\
        &        &        & \gamma^{m - 1}
    \end{pmatrix}
    \right]
    \begin{pmatrix}
      x_1 \\ x_2 \\ \vdots \\ x_n
    \end{pmatrix} = 0.
  \end{equation}
  But in the commutative ring $Z(A)[\gamma]$ the matrix
  \begin{equation*}
    H =
    \begin{pmatrix}
      1 & 1              & 1                 & \dots & 1 \\
      1 & \gamma         & \gamma^2          & \dots &
                                        \gamma^{2(k - 1)} \\
      1 & \vdots         & \vdots            &       & \vdots \\
      1 & \gamma^{k - 1} & \gamma^{2(k - 1)} & \dots &
                                      \gamma^{(k - 1)(k - 1)}
    \end{pmatrix}
    \in M_{k \times k} \left( Z(A)[\gamma] \right)
  \end{equation*}
  is invertible. Therefore $H$ is also invertible in
  $M_{k \times k}(A)$ and thus induces a group automorphism
  of $A^k$. If we let $x_H = (x_1,\dots,x_k)$,
  $x_U = (x_{k + 1},\dots,x_n)$, we can rewrite
  equation~\eqref{equ:dual} as
  \begin{equation*}
    \left(\begin{array}{c|c} H & U \end{array}\right)
    \left(\begin{array}{c} x_H \\ \hline x_U \end{array}\right) = 0
    \text{ and }
    \left(\begin{array}{c|c} H & 0 \end{array}\right)
    \left(\begin{array}{c} x_H \\ \hline 0 \end{array}\right) = -
    \left(\begin{array}{c|c} 0 & U \end{array}\right)
    \left(\begin{array}{c} 0 \\ \hline x_U \end{array}\right).
  \end{equation*}
  For each choice of $x_U$ we have only one possible value for $x_H$.
  Thus $|\LRS^{\perp}| = |A|^{n - k} = |\RRS|$
  by Proposition~\ref{prop:rs} and therefore
  $\LRS^{\perp} = \RRS$. Similarly, we have ${}^{\perp} \RRS = \LRS$.
\end{proof}
\begin{theorem}
\label{thm:qbch_rs}
  Let $\Gamma \in \matfqs$ be a primitive
  $m$-th root of unity and $\C = \qbch_q(m,\ell,\delta,\Gamma)$.
  Then there exists a RRS code $\RRS$ over the ring
  $\matfqs$ with parameters
  $\left[ n,n - \delta + 1 \right]_{\matfqs}$
  and a $\F_q$-linear, $F_q$-isometric embedding
  $\psi : \C \rightarrow \RRS$.
\end{theorem}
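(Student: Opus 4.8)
The plan is to realize $\C$ inside a Reed--Solomon code over the ring $A \eqdef \matfqs$ by turning each $\F_q^{\ell}$-block of a codeword into a single matrix, and then to recognise the defining parity checks of $\C$ as exactly the orthogonality conditions that cut out the right dual computed in Proposition~\ref{prop:dual}. First I would fix the ambient code. Since $\Gamma$ is a primitive $m$-th root of unity in $A$, the vector $x = (1,\Gamma,\dots,\Gamma^{m-1}) \in A^m$ is a valid support: its entries are powers of a single matrix and hence commute, while each difference $\Gamma^{a} - \Gamma^{b}$ with $a \neq b$ factors as a unit times some $\Gamma^{c} - 1$ with $0 < c < m$, which is a unit by the definition of a primitive root of unity. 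Let $\RRS$ be the RRS code over $A$ with support $x$ and parameters $[m, m - \delta + 1]_A$, and let $\LRS$ be the LGRS code with parameters $[x,x,\delta-1]_A$. By Proposition~\ref{prop:dual} we have $\RRS = \LRS^{\perp}$, and by the computation inside its proof $\LRS$ is generated by the rows $s_i \eqdef (1,\Gamma^{i},\Gamma^{2i},\dots,\Gamma^{(m-1)i})$ for $i = 1,\dots,\delta - 1$, the multiplier $v = x$ producing precisely this shifted range of exponents. This already lines up with the $\delta-1$ parity checks of Definition~\ref{defi:qbch}.

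Next I would define the embedding. For a row vector $a \in \F_q^{\ell}$, let $\iota(a) \in A$ be the matrix whose first column is $a^{T}$ and whose other columns vanish, and set $\psi(c_1,\dots,c_m) = (\iota(c_1),\dots,\iota(c_m))$. The map $\iota$ is clearly $\F_q$-linear and injective, so $\psi$ is an $\F_q$-linear injection. To prove $\psi(\C) \subseteq \RRS$ I would invoke Remark~\ref{rem:inner} together with Proposition~\ref{prop:dual}: membership $\psi(c) \in \RRS = \LRS^{\perp}$ is equivalent to $\inner{s_i}{\psi(c)} = \sum_{j=1}^{m} \Gamma^{(j-1)i}\iota(c_j) = 0$ for $i = 1,\dots,\delta-1$. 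Because left multiplication by $\Gamma^{(j-1)i}$ acts column-wise, the matrix $\Gamma^{(j-1)i}\iota(c_j)$ has first column $\Gamma^{(j-1)i} c_j^{T}$ and all other columns zero; hence the sum vanishes if and only if $\sum_{j=1}^{m} \Gamma^{(j-1)i} c_j^{T} = 0$, which after reindexing $j \mapsto j+1$ is exactly the condition $\sum_{j=0}^{m-1}(\Gamma^{i})^{j}(c_{j+1})^{T} = 0$ defining $\qbch_q(m,\ell,\delta,\Gamma)$. Thus $\psi(c) \in \RRS$ for every $c \in \C$.

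Finally, the isometry is immediate from the column construction: $\iota(c_j) = 0$ if and only if $c_j = 0$, so the Hamming weight of $\psi(c)$ in $A^m$ equals the number of nonzero blocks of $c$, that is $\blockw_{\ell}(c)$. Hence $\psi$ carries the $\ell$-block distance on $\C$ to the Hamming distance on $\RRS$ isometrically, and combined with the previous paragraph this gives the desired $\F_q$-linear isometric embedding. The step demanding the most care is the bookkeeping of the non-commutativity of $A$: one must keep $\Gamma^{(j-1)i}$ on the left and $\iota(c_j)$ on the right throughout, as dictated by the right dual and the inner product of Remark~\ref{rem:inner}, and one must apply Proposition~\ref{prop:dual} with the exponent range $i = 1,\dots,\delta-1$ arising from the multiplier $v = x$ rather than the naive $0,\dots,\delta-2$. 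Once the embedding is chosen to be column-based, every remaining verification is routine.
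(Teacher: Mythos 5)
Your proposal is correct and follows essentially the same route as the paper: identify the quasi-BCH parity-check matrix with the generator matrix of the LGRS code of parameters $[x,x,\delta-1]_{\matfqs}$, apply Proposition~\ref{prop:dual} to get the ambient RRS code, and embed via the first-column map $\psi$. You simply make explicit several verifications the paper leaves as remarks (validity of the support, the inner-product computation showing $\psi(\C)\subseteq\RRS$, and the isometry).
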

\begin{proof}
  A parity-check matrix of $\C$ is
  \begin{equation*}
    H = \begin{pmatrix}
      I_{\ell} & \Gamma         & \cdots & \Gamma^{m - 1} \\
      I_{\ell} & \Gamma^2       & \cdots & \Gamma^{2(m - 1)} \\
      \vdots   & \vdots         &        & \vdots \\
      I_{\ell} & \Gamma^{\delta - 1} & \cdots 
                                       & \Gamma^{(\delta - 1)(m - 1)}
    \end{pmatrix} \in M_{(\delta - 1)\ell,m\ell}(\F_{q^s}).
  \end{equation*}
  Remark that $H$ is a generator matrix of the LGRS code with
  parameters $[x,x,\delta - 1]_{\matfqs}$
  over the ring $\matfqs$ and by
  Proposition~\ref{prop:dual} its dual is the RRS with
  parameters $[x,\delta - 1]_{\matfqs}$.

  Now let
  \begin{equation*}
    \begin{array}{rcl}
      \psi : \C & \longrightarrow &
          \left( \matfqs \right)^m \\
      (c_{11},\dots,c_{1\ell},\dots,
       c_{m1},\dots,c_{m\ell}) & \longmapsto &
      \left[
        \begin{pmatrix}
          c_{11}    & 0      & \dots & 0 \\
          \vdots    & \vdots &       & \vdots \\
          c_{1\ell} & 0      & \dots & 0 \\
        \end{pmatrix},\dots,
        \begin{pmatrix}
          c_{m1}    & 0      & \dots & 0 \\
          \vdots    & \vdots &       & \vdots \\
          c_{m\ell} & 0      & \dots & 0 \\
        \end{pmatrix}
      \right]
    \end{array}.
  \end{equation*}
  Obviously, $\psi$ is $\F_q$-linear, injective and isometric
  and by the above remark we have $\psi(\C) \subseteq \RRS$.
\end{proof}
Theorem~\ref{thm:qbch_rs} generalizes the well-known
\cite[Theorem~2, page 300]{SloMacWil86}
relation between BCH codes and Reed-Solomon codes.
The above relation will allow us to adapt the unique decoding
algorithm from \cite{BCQ2012} to quasi-BCH codes.

\subsection{The Welch-Berlekamp algorithm for quasi-BCH codes}
In this
Subsection we let $A$ designate a finite ring with identity.
Before giving the Welch-Berlekamp decoding algorithm, we need to define
what the \emph{evaluation} of a bivariate polynomial over $A$ is. Let
$Q = \sum Q_{i,j} X^i Y^j \in A[X,Y]$ be such a polynomial. We define
the \emph{evaluation of $Q$ at $(a,b) \in A^2$} to be
\begin{equation*}
  (a,b)Q = \sum a^i b^j Q_{i,j} \in A.
\end{equation*}
Be careful of the order of $a$, $b$ and $Q_{i,j}$.
This choice will be explained
in the proof of Lemma~\ref{lem:bivariate_degree1_ev}.
Let $f \in A[X]$, we define \emph{the evaluation of $Q$ at $f$} to be
\begin{equation*}
  (X,f(X))Q = \sum X^j (f(X))^j Q_{i,j} \in A[X].
\end{equation*}
As in the univariate case, the evaluation maps defined above are not ring
homomorphisms in general.
\begin{lemma}
\label{lem:bivariate_degree1_ev}
  Let $g \in A[X]$, $Q \in A[X,Y]$ of degree at most $1$ in $Y$ and
  $a \in A$. Then
  \begin{equation*}
    (a)((X,g(X))Q) = (a,(a)g)Q.
  \end{equation*}
\end{lemma}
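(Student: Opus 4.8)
The plan is to exploit the additivity of both evaluation maps together with the hypothesis $\deg_Y Q \le 1$, reducing the identity to a single monomial computation inside $A[X]$, where $X$ is central (it commutes with every element of $A$). First I would write $Q = \sum_i Q_{i,0} X^i + \sum_i Q_{i,1} X^i Y$, splitting off the part constant in $Y$ from the part linear in $Y$. Since $(a)(\cdot)$ and $(\cdot,\cdot)Q$ are both additive, it suffices to match the two sides degree-by-degree in $Y$. The $j=0$ contribution is immediate: the constant-in-$Y$ part of $(X,g(X))Q$ is $\sum_i X^i Q_{i,0}$, whose right evaluation at $a$ is $\sum_i a^i Q_{i,0}$, and this is exactly the $j=0$ term $\sum_i a^i ((a)g)^0 Q_{i,0}$ of $(a,(a)g)Q$. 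So the whole content of the lemma lives in the linear-in-$Y$ term.

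For the $j=1$ part, the linear-in-$Y$ contribution to $(X,g(X))Q$ is the polynomial $\sum_i X^i g(X) Q_{i,1} \in A[X]$. I would expand $g = \sum_m g_m X^m$ and use centrality of $X$ to rewrite each monomial $X^i g_m X^m Q_{i,1}$ as $g_m Q_{i,1} X^{i+m}$; thus the coefficient of $X^n$ in this polynomial is $\sum_{i+m=n} g_m Q_{i,1}$. Applying the right evaluation $(a)$, which sends $X^n$ to $a^n$ placed on the left of the coefficient, produces $\sum_{i,m} a^{i+m} g_m Q_{i,1}$. Regrouping via $a^{i+m} = a^i a^m$ gives $\sum_i a^i (\sum_m a^m g_m) Q_{i,1} = \sum_i a^i ((a)g) Q_{i,1}$, which is precisely the $j=1$ term $\sum_i a^i ((a)g)^1 Q_{i,1}$ of $(a,(a)g)Q$. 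Adding the two contributions then yields the stated equality.

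The main obstacle is noncommutativity of $A$, and it is exactly what dictates the ordering chosen in the definition of the evaluation. One is tempted to factor the linear term as $g(X)\,R(X)$ with $R(X)=\sum_i X^i Q_{i,1}$ and invoke a product rule $(a)(g\,R) = (a)g\,(a)R$; but the product rule recorded earlier requires $a$ to commute with the coefficients of the left factor $g$, which we do \emph{not} assume. The reason the computation nevertheless goes through is that there is only a \emph{single} factor of $g(X)$, flanked by the powers of $X$ on its left and by the constant $Q_{i,1}$ on its right, so the substitution $X \mapsto a$ distributes correctly with no commutation hypothesis at all. This is precisely why the evaluation $(a,b)Q$ was defined as $\sum a^i b^j Q_{i,j}$, with the powers of the evaluation points on the left and the coefficient of $Q$ pushed to the far right. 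I would emphasize, finally, that the degree restriction is essential: for $\deg_Y Q \ge 2$ the factor $(g(X))^j$ would have to become $((a)g)^j$, and right-evaluating that power genuinely demands that $a$ commute with the coefficients of $g$; the hypothesis $\deg_Y Q \le 1$ is exactly what lets the lemma hold over an arbitrary finite ring.
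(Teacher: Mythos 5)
Your proof is correct and takes essentially the same route as the paper's: decompose $Q = Q_0(X) + Q_1(X)Y$, dispose of the $Y$-constant part immediately, and verify $(a)\bigl(\sum_i X^i g(X) Q_{1i}\bigr) = \sum_i a^i \,(a)g\, Q_{1i}$ by direct computation. The paper compresses this into a three-line ``easy calculation,'' whereas you fill in the coefficient-level regrouping using the centrality of $X$ and add the (accurate) observations about why no product rule is invoked and why $\deg_Y Q \le 1$ is essential.
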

\begin{\myproof}
  We write
  \begin{align*}
    Q(X,Y) &= Q_0(X) + Q_1(X) Y \\
           &= Q_0(X) + \left( \sum_i Q_{1i} X^i \right) Y.
  \end{align*}
  The proof is an easy calculation:
  \begin{align*}
    (a)((X,g(X))Q) &= (a) \left( Q_0(X)
                    + \sum_{i} X^i g(X) Q_{1i} \right) \\
                   &= (a)Q_0 + \sum_{i} a^i (a)g Q_{1i} \\ 
                   &= (a,(a)g)Q \text{ by definition}.
  \end{align*}
\end{\myproof}
We let $\C = \qbch_q(m,\ell,\delta,\Gamma)$,
$\tau = \floor{\frac{\delta - 1}{2}}$, $n = m$,
$k = n - \delta + 1$ and
\begin{equation*}
  \begin{array}{rcl}
    \pr : \left(\matfqs\right)^m & \longrightarrow & \F_q^{m\ell} \\
    \left[
      \begin{pmatrix}
        a_{11}^1     & \dots & a_{1\ell}^1 \\
        \vdots       &       & \vdots \\
        a_{\ell 1}^1 & \dots & a_{\ell\ell}^1
      \end{pmatrix},\dots,
      \begin{pmatrix}
        a_{11}^m     & \dots & a_{1\ell}^m \\
        \vdots       &       & \vdots \\
        a_{\ell 1}^m & \dots & a_{\ell\ell}^m
      \end{pmatrix}
    \right]
    & \longmapsto &
    (a_{11}^1,\dots,a_{\ell 1}^1,\dots,a_{11}^m,\dots,a_{\ell 1}^m).
  \end{array}
\end{equation*}
\begin{algorithm}
\caption{Welch-Berlekamp for quasi-BCH codes}
\label{al:wb}
\begin{algorithmic}[1]
  \REQUIRE a received vector $y \in \F_q^{m\ell}$ with at most
           $\tau$ errors.
  \ENSURE the unique codeword within distance $\tau$ of $y$.
  \STATE $(Z_1,\dots,Z_m) \gets \psi(y)$
         where $\psi$ is the map from Theorem~\ref{thm:qbch_rs}.
  \STATE Find $Q = Q_0(X) + Q_1(X) Y \in (\matfqs[X])[Y]$ of
         degree $1$ such that
         \begin{enumerate}
           \item $(\Gamma^{i - 1},Z_i)Q = 0$
                 for all $i = 1,\dots,m - 1$,
           \item $\deg Q_0 \leq n - \tau - 1$,
           \item $\deg Q_1 \leq n - \tau - 1 - (k - 1)$.
         \end{enumerate}
  \STATE $f \gets$ the unique root of $Q$ in $\poly{(\matfqs)}{k}$
         such that
         $d \left( (Z_1,\dots,Z_m),
                   ((I_{\ell})f,\dots,(\Gamma^{m - 1})f)
            \right) \leq \tau$.
  \RETURN $\pr \left(
             (I_{\ell})f,(\Gamma)f,\ldots,(\Gamma^{m - 1})f
           \right)$.
\end{algorithmic}
\end{algorithm}
\begin{lemma}
\label{lem:wb_interpolation}
  Let $y \in \F_q^{m\ell}$ be a received word containing
  at most $\tau$ errors.
  Then there exists a nonzero bivariate polynomial
  $Q = Q_0 + Q_1 Y \in (\matfqs)[X,Y]$ satisfying
  \begin{enumerate}
    \item $(\Gamma^{i - 1},Z_i)Q = 0$ for $i = 1,\ldots,n$.
    \item $\deg Q_0 \leq n - \tau - 1$.
    \item $\deg Q_1 \leq n - \tau - 1 - (k - 1)$.
  \end{enumerate}
\end{lemma}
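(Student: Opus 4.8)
The plan is to prove existence by a dimension count: I set up the three requirements as a homogeneous system of linear equations in the unknown matrix–coefficients of $Q$ and show there are strictly more unknowns than equations, so a nonzero solution must exist.

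First I would count the unknowns. Writing $Q = Q_0 + Q_1 Y$ with $Q_0, Q_1 \in \matfqs[X]$, requirement (2) lets $Q_0$ range over degree $\leq n - \tau - 1$, i.e.\ $n - \tau$ free coefficients in $\matfqs$, and requirement (3) lets $Q_1$ have degree $\leq n - \tau - 1 - (k-1) = n - \tau - k$, i.e.\ $n - \tau - k + 1 = \delta - \tau$ free coefficients (using $n - k = \delta - 1$). Hence $Q$ has
\[
  N \eqdef (n - \tau) + (\delta - \tau) = n + \delta - 2\tau
\]
matrix–coefficients to be determined. Since $\tau = \floor{(\delta - 1)/2}$ gives $2\tau \leq \delta - 1$, we obtain $N \geq n + 1$, strictly larger than the number $n$ of interpolation conditions in (1).

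Next I would interpret requirement (1) as a linear system. For each $i$, expanding the evaluation according to its definition gives
\[
  (\Gamma^{i-1}, Z_i)Q
  = \sum_{a} (\Gamma^{i-1})^{a}\, Q_{0,a}
  + \sum_{a} (\Gamma^{i-1})^{a} Z_i\, Q_{1,a},
\]
which is an $\F_{q^s}$-linear expression in the entries of the unknown matrices $Q_{0,a}, Q_{1,a}$, since each summand is a \emph{fixed} matrix times an unknown matrix and left multiplication by a fixed matrix is $\F_{q^s}$-linear in the entries. Thus the $n$ matrix–equations of (1) amount to $n\ell^2$ homogeneous scalar equations over the field $\F_{q^s}$ in the $N\ell^2$ scalar unknowns. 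Because $N\ell^2 \geq (n+1)\ell^2 > n\ell^2$, this homogeneous system over a field has a nonzero solution, which furnishes a nonzero $Q$ meeting all three requirements.

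The one real subtlety — and the step I would be most careful about — is that $\matfqs$ is \emph{not} a field, so the naive ``more unknowns than equations'' argument phrased directly over the ring is not automatic ($\matfqs$ has zero divisors and modules over it need not behave like vector spaces). I sidestep this by descending to the underlying $\F_{q^s}$-vector space of matrix entries, where the standard rank count over a field applies verbatim. Equivalently one could argue by cardinality: the additive map $Q \mapsto ((\Gamma^{i-1}, Z_i)Q)_{i=1}^{n}$ sends a finite set of size $|\matfqs|^{N}$ into one of size $|\matfqs|^{n} < |\matfqs|^{N}$, hence cannot be injective and has nontrivial kernel. Should one instead want an explicit witness, one could take $Q_1$ to be the error-locator $\prod_{i \in E}(X - \Gamma^{i-1})$ — whose factors are units by primitivity of $\Gamma$ — and adjust $Q_0$ accordingly; but for a pure existence statement the counting argument is the most economical.
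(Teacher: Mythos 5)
Your proof is correct and is essentially the paper's own argument: the paper likewise treats condition (1) as a homogeneous linear system over $\F_{q^s}$ (working column-by-column, with $n\ell$ equations against $\ell\left[(n-\tau)+(n-\tau-(k-1))\right] = \ell(n+1)$ unknowns) and concludes by the same more-unknowns-than-equations count. Your write-up is in fact more careful than the paper's one-line proof, since you make explicit both the reduction from the matrix ring to the field $\F_{q^s}$ and the inequality $n+\delta-2\tau \geq n+1$ covering even $\delta$.
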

\begin{\myproof}
  We solve the problem with linear algebra over $\F_{q^s}$.
  We have, for each column of the solution, $n\ell$ equations and
  $\ell \left[ (n - \tau) + (n - \tau - (k - 1))\right]
   = \ell (n + 1)$ unknowns by Proposition~\ref{prop:rs}.
\end{\myproof}
\begin{lemma}
\label{lem:wb_roots}
  Let $Q \in (\matfqs)[X,Y]$ satisfying the three
  conditions of Lemma~\ref{lem:wb_interpolation} and
  $f \in \poly{(\matfqs)}{k}$ be such that
  $d((Z_1,\dots,Z_m),
     ((I_{\ell})f,\dots,(\Gamma^{m - 1})f)) \leq \tau$.
  Then $(X,f(X))Q = 0$.
\end{lemma}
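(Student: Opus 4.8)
The goal is to show that the interpolation polynomial $Q$ vanishes identically when evaluated at $f$, i.e. $(X,f(X))Q = 0$ in $(\matfqs)[X]$. This is the standard second half of a Welch–Berlekamp argument, so let me think about how it transfers to the matrix-ring setting.

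The polynomial $(X,f(X))Q = Q_0(X) + X^{\,0}\cdots$ — wait, let me reconsider the evaluation convention. By the definition given just before the lemma, $(X,f(X))Q = \sum_{i,j} X^i (f(X))^j Q_{i,j}$, and since $Q = Q_0 + Q_1 Y$ has degree at most $1$ in $Y$, we get $(X,f(X))Q = Q_0(X) + f(X)\,Q_1(X)$, a univariate polynomial over $\matfqs$. Its degree is bounded by $\max(\deg Q_0, \deg f + \deg Q_1) \le n - \tau - 1$, using condition (2) and the fact that $\deg f \le k-1$ together with condition (3). So I will first establish that $(X,f(X))Q$ is a univariate polynomial of degree at most $n - \tau - 1$.

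**Planning the proof of Lemma \ref{lem:wb_roots}.**

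The goal is to show that the interpolation polynomial $Q$ vanishes when evaluated at $f$, i.e. $(X,f(X))Q = 0$ in $(\matfqs)[X]$. This is the standard second half of a Welch--Berlekamp argument, but transported to the matrix-ring setting, so the real work is in handling non-commutativity. First I would unwind the definition: since $Q = Q_0 + Q_1 Y$ has degree at most $1$ in $Y$, the evaluation gives the \emph{univariate} polynomial $R(X) \eqdef (X,f(X))Q = Q_0(X) + f(X)Q_1(X) \in (\matfqs)[X]$. Its degree is controlled by conditions (2) and (3) together with $\deg f \le k-1$: indeed $\deg(f\,Q_1) \le (k-1) + \bigl(n-\tau-1-(k-1)\bigr) = n-\tau-1$, so $\deg R \le n-\tau-1$ and $R$ has at most $n-\tau$ matrix coefficients $R_0,\dots,R_{n-\tau-1}$.

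Next I would exhibit a large set of points at which the right evaluation of $R$ vanishes. By hypothesis the RS codeword $\bigl((I_\ell)f,\dots,(\Gamma^{m-1})f\bigr)$ is within distance $\tau$ of $(Z_1,\dots,Z_m)$, so it agrees with it in at least $n-\tau$ positions. For each such agreeing position $i$ we have $Z_i = (\Gamma^{i-1})f$, and Lemma~\ref{lem:bivariate_degree1_ev} (the reason for the chosen evaluation order) yields $(\Gamma^{i-1})R = (\Gamma^{i-1},(\Gamma^{i-1})f)Q = (\Gamma^{i-1},Z_i)Q = 0$ by condition (1) of Lemma~\ref{lem:wb_interpolation}. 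Thus the right evaluation of $R$ vanishes at the $\ge n-\tau$ distinct powers $\Gamma^{i-1}$ corresponding to error-free positions.

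Finally I would close the argument with a Vandermonde/primitive-root computation, exactly the technique already used in the proof of Proposition~\ref{prop:dual}. Picking $n-\tau$ error-free positions, the vanishing conditions $(\Gamma^{i-1})R = \sum_{t=0}^{n-\tau-1}\Gamma^{(i-1)t}R_t = 0$ assemble into a linear system $V\cdot(R_0,\dots,R_{n-\tau-1})^{T} = 0$, where $V$ is the block Vandermonde matrix with entries $\Gamma^{(i-1)t}$. All these entries lie in the commutative subring $Z(\matfqs)[\Gamma]$, so the Vandermonde determinant can be computed there as a product of differences $\Gamma^{a}-\Gamma^{b}=\Gamma^{b}(\Gamma^{a-b}-1)$; since $\Gamma$ is a primitive $m$-th root of unity these factors are units, hence the determinant is a unit and $V$ is invertible over $\matfqs$. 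Therefore $R_t = 0$ for all $t$, giving $(X,f(X))Q = R = 0$.

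The main obstacle is precisely that $\matfqs$ is non-commutative, so the classical ``a polynomial with more roots than its degree is zero'' does not apply directly. The remedy is the reduction to the commutative subring $Z(\matfqs)[\Gamma]$ together with the unit property of the primitive root of unity, which is what makes the Vandermonde matrix invertible and transfers invertibility back to the full matrix ring, as in Proposition~\ref{prop:dual}. The remaining steps (the degree bound and the evaluation identity) are routine once Lemma~\ref{lem:bivariate_degree1_ev} is invoked.
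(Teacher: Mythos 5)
Your proof is correct and follows the same route as the paper's: bound the degree of $(X,f(X))Q$ by $n-\tau-1$, use Lemma~\ref{lem:bivariate_degree1_ev} together with the interpolation conditions to show its right evaluation vanishes at the $\ge n-\tau$ powers $\Gamma^{i-1}$ corresponding to error-free positions, and conclude that it is zero. The only difference is that the paper stops at ``and therefore we must have $(X,f(X))Q=0$,'' whereas you explicitly justify this last step with the Vandermonde argument over the commutative subring $Z(\matfqs)[\Gamma]$ (the same device as in Proposition~\ref{prop:dual}), which is exactly the detail the paper leaves implicit.
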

\begin{\myproof}
  The polynomial $(X,f(X))Q$ has degree at most $n - \tau - 1$. By
  Lemma~\ref{lem:bivariate_degree1_ev} we have
  $(\Gamma^{i - 1})((X,f(X))Q) = (\Gamma^{i - 1},(\Gamma^{i - 1})f)Q
   = (\Gamma^{i - 1},Z_i)Q = 0$ for at least
  $n - \tau$ values of $i \in \{1,\ldots,n\}$. And therefore
  we must have $(X,f(X))Q = 0$.
\end{\myproof}
\begin{proposition}
\label{prop:wb_correct}
  Algorithm~\ref{al:wb} works correctly as expected and can correct
  up to $\floor{\frac{\delta - 1}{2}}$ errors.
\end{proposition}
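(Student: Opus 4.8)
The plan is to establish correctness of Algorithm~\ref{al:wb} by assembling the three lemmas that precede it into a standard Welch--Berlekamp argument, adapted to the matrix-ring setting. The overall structure is the classical one: first guarantee that the interpolation step (line~2) always produces a nonzero solution $Q$, then show that the true message polynomial is recoverable as the root of $Q$, and finally verify that this root is unique within the decoding radius so that line~3 is well defined.

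First I would unwind the hypothesis. We are given $y \in \F_q^{m\ell}$ with at most $\tau = \floor{(\delta-1)/2}$ errors, so there exists a codeword $c \in \C$ with $\blockw_{\ell}(y - c) \leq \tau$. By Theorem~\ref{thm:qbch_rs} the map $\psi$ sends $c$ into the RRS code $\RRS$, so $\psi(c)$ is of the form $((I_{\ell})f,\dots,(\Gamma^{m-1})f)$ for some $f \in \poly{(\matfqs)}{k}$; since $\psi$ is isometric, the vector $(Z_1,\dots,Z_m) = \psi(y)$ computed in line~1 satisfies $d((Z_1,\dots,Z_m),((I_{\ell})f,\dots,(\Gamma^{m-1})f)) \leq \tau$. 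Next I would invoke Lemma~\ref{lem:wb_interpolation} to assert that a nonzero $Q = Q_0 + Q_1 Y$ satisfying the three degree and interpolation conditions exists, so the search in line~2 does not fail. Then Lemma~\ref{lem:wb_roots} applies to exactly this $f$, yielding $(X,f(X))Q = 0$, i.e. $Q_0(X) = -Q_1(X) f(X)$; hence $f$ is recovered as $-Q_0/Q_1$ (one must check $Q_1 \neq 0$, which follows because $Q_0 = 0$ would force $Q = 0$ given $Q_1 f = 0$ and the degree bound $\deg Q_0 \le n-\tau-1$, contradicting $Q \neq 0$). Finally I would count that $\psi(c)$ lies within distance $\tau$ of $(Z_1,\dots,Z_m)$, so the root produced in line~3 exists; and the output of line~4, namely $\pr((I_{\ell})f,\dots,(\Gamma^{m-1})f)$, equals $\pr(\psi(c)) = c$.

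The step I expect to be the main obstacle is uniqueness: one must argue that no two distinct $f, f' \in \poly{(\matfqs)}{k}$ can both lie within block-distance $\tau$ of $(Z_1,\dots,Z_m)$, so that line~3 genuinely selects a single polynomial. The classical argument is that if both were within $\tau$, then their codewords would be within $2\tau \leq \delta - 1$ of each other, contradicting the minimum distance $n - k + 1 = \delta$ of $\RRS$ guaranteed by Proposition~\ref{prop:rs}. The subtlety in the noncommutative matrix setting is that one works over $\matfqs$ rather than a field, so I would take care that the minimum-distance bound of Proposition~\ref{prop:rs} is being applied to the RRS code over the ring, and that "distinct roots" of the degree-one $Q$ cannot coexist because the relation $Q_0 = -Q_1 f$ determines $f$ uniquely once $Q_1$ is a non-zero-divisor on the relevant evaluations. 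Assembling these observations, together with the triangle inequality for the $\ell$-block distance, completes the verification that the algorithm returns the unique codeword within distance $\tau$ and corrects up to $\floor{(\delta-1)/2}$ errors.
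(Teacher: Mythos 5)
Your proposal is correct and follows the same route as the paper, which disposes of the proposition in one line as ``a direct consequence of Lemmas~\ref{lem:wb_interpolation} and~\ref{lem:wb_roots}''; you simply make explicit the standard Welch--Berlekamp assembly (existence of $Q$, the true $f$ being a root, uniqueness via the minimum distance $\delta = n-k+1$ of $\RRS$ from Proposition~\ref{prop:rs}) that the authors leave implicit. The extra care you take with the uniqueness step and with $Q_1 \neq 0$ in the noncommutative setting is detail the paper omits rather than a different argument.
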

\begin{proof}
  This is a direct consequence of Lemmas~\ref{lem:wb_interpolation}
  and~\ref{lem:wb_roots}.
\end{proof}

\section{Quasi-BCH codes as interleaved codes}
In this Section we prove that quasi BCH codes can be viewed as
an interleaving of classical BCH codes. We fix for this Section
$\Gamma \in \matfqs$ a primitive $m$-th root of unity and
$\C = \qbch_q(m,\ell,\delta,\Gamma)$. We first recall the
definition of interleaved codes.
\begin{definition}
  Let $\C_1,\dots,\C_{\ell}$ be error correcting codes over
  $\F_q$. The \emph{interleaved code $\C$ with respect to
  $\C_1,\dots,\C_{\ell}$} is a subset of $M_{\ell \times m}(\F_q)$,
  equipped with the $\ell$-bloc distance with respect to the columns,
  such that $c \in \C$ if and only if the $i$-th row of $c$ is
  a codeword of $\C_i$ for $i = 1,\dots,\ell$.
\end{definition}
\begin{lemma}
\label{lem:gamma_diag}
  The matrix $\Gamma$ diagonalizes over an extension of $\F_{q^s}$
  and its eigenvalues are all primitive $m$-th roots of unity.
\end{lemma}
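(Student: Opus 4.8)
The plan is to exploit the two defining properties of a primitive $m$-th root of unity in $\matfqs$, namely $\Gamma^m = I_{\ell}$ and the invertibility of $\Gamma^i - I_{\ell}$ for $0 < i < m$, together with the classical criterion that a matrix is diagonalizable over a field precisely when its minimal polynomial splits into distinct linear factors over that field.

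First I would establish that the characteristic $p$ of $\F_q$ does not divide $m$. Suppose for contradiction that $p \mid m$ and write $m = p^a m'$ with $p \nmid m'$ and $a \geq 1$. Since $\Gamma^{m'}$ and $I_{\ell}$ commute, they generate a commutative subring of characteristic $p$, so $(\Gamma^{m'} - I_{\ell})^{p^a} = (\Gamma^{m'})^{p^a} - I_{\ell} = \Gamma^m - I_{\ell} = 0$; hence $\Gamma^{m'} - I_{\ell}$ is nilpotent. A nilpotent matrix is invertible only if it is the zero matrix, which would force $\Gamma^{m'} = I_{\ell}$ with $0 < m' < m$, contradicting the primitivity of $\Gamma$. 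Therefore $p \nmid m$, and consequently $X^m - 1$ is separable over $\overline{\F_q}$ because its derivative $m X^{m-1}$ shares no root with it.

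Next, since $\Gamma^m = I_{\ell}$, the minimal polynomial of $\Gamma$ divides $X^m - 1$ and is therefore squarefree. Letting $E$ be the splitting field of $X^m - 1$ over $\F_{q^s}$ (a finite extension), the minimal polynomial splits into distinct linear factors over $E$, so $\Gamma$ is diagonalizable over $E$. It then remains to identify the eigenvalues. Any eigenvalue $\lambda \in E$ of $\Gamma$ satisfies $\lambda^m = 1$, so its multiplicative order $d$ divides $m$. If $d < m$, then $\lambda^d - 1 = 0$, so $\Gamma^d - I_{\ell}$ is singular over $E$ and thus $\det(\Gamma^d - I_{\ell}) = 0$; but the determinant is computed by the same polynomial formula in $\F_{q^s}$ and in $E$, so $\Gamma^d - I_{\ell}$ would fail to be invertible already in $\matfqs$, contradicting the primitivity hypothesis for $0 < d < m$. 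Hence $d = m$, and every eigenvalue of $\Gamma$ is a primitive $m$-th root of unity.

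The only delicate point is the first step: diagonalizability can genuinely fail when the characteristic divides $m$, since the minimal polynomial may then acquire repeated roots. Ruling out $p \mid m$ via the nilpotency argument is exactly what secures the separability of $X^m - 1$ and makes the rest of the argument go through; the remaining steps are routine applications of standard linear algebra over the splitting field $E$.
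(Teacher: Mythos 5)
Your proposal is correct and, for its second half, follows the same route as the paper: pass to the splitting field of $X^m-1$ over $\F_{q^s}$, note that the minimal polynomial of $\Gamma$ divides $X^m-1$ so every eigenvalue is an $m$-th root of unity, and rule out eigenvalues of order $d<m$ by observing that such an eigenvalue would make $\Gamma^d-I_{\ell}$ singular, contradicting the primitivity condition $\Gamma^d-I_{\ell}\in\matfqs^{\times}$ (the paper phrases this via a zero diagonal entry of $P^{-1}(\Gamma^d-I_{\ell})P$, you via the base-change invariance of the determinant; these are the same argument). Where you genuinely add something is the first step: the paper simply asserts the existence of $P$ with $P^{-1}\Gamma P$ diagonal, which tacitly requires $X^m-1$ to be separable, i.e.\ $\gcd(m,q)=1$ --- a hypothesis nowhere stated in Definition~\ref{defi:qbch}. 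Your nilpotency argument (if $p\mid m$, write $m=p^am'$ and get $(\Gamma^{m'}-I_{\ell})^{p^a}=\Gamma^m-I_{\ell}=0$ in the commutative subring generated by $\Gamma^{m'}$, so $\Gamma^{m'}-I_{\ell}$ is nilpotent, hence non-invertible, contradicting primitivity since $0<m'<m$) shows that separability is automatic from the definition of a primitive root of unity in $\matfqs$, and thus closes a gap the paper leaves open. One phrasing quibble: ``a nilpotent matrix is invertible only if it is the zero matrix'' is backwards --- a nonzero-size nilpotent matrix is never invertible, zero or not; the contradiction you want is simply that $\Gamma^{m'}-I_{\ell}$ has determinant zero while primitivity demands it be a unit. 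This does not affect the validity of the argument.
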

\begin{proof}
  Let $\F_{q^{s'}} \supseteq \F_{q^s}$ be the splitting field of
  $X^m - 1$. The polynomial $X^m-1$ is a multiple of the minimal
  polynomial $\mu(X)$ of $\Gamma$. Hence the egeinvalues of $\Gamma$
  are $m$-roots of unity. Let $P \in \GL_{\ell}(\F_{q^{s'}})$ be such
  that $P^{-1} \Gamma P$ is diagonal. Now if an eigenvalue
  $\lambda_i$ of $\Gamma$ has order $d < m$, then
  \begin{equation*}
    P^{-1} (\Gamma^d - I_{\ell}) P =
    \begin{pmatrix}
      \lambda_1^d &        &             &        & & \\
                  & \ddots &             &        & & \\
                  &        & \lambda_i^d &        & & \\
                  &        &             & \ddots & & \\
                  &        &             &        & \lambda_\ell^{d}
    \end{pmatrix} - I_{\ell}
  \end{equation*}
  is singular as its $i$-th diagonal element would be zero.
  Consequently $\Gamma^d - I_{\ell} \not\in \GL_{\ell}(\F_{q^{s'}})$
  which is absurd.
\end{proof}
\begin{theorem}
\label{thm:interleaved}
  The quasi-BCH code $\C$ over $\F_q$ is an interleaved code
  of $\ell$ subcodes of Reed-Solomon codes over $\F_{q^{s'}}$
  in the following sense:
  there exists $\ell$ Reed-Solomon codes $\C_1,\dots,\C_{\ell}$ over
  $\F_q$ and an isometric isomorphism from $\C$, equipped with the
  $\ell$-block distance, to a subcode of the interleaved code with
  respect to $\C_1,\dots,\C_{\ell}$.
\end{theorem}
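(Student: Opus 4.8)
The plan is to exploit the diagonalization of $\Gamma$ supplied by Lemma~\ref{lem:gamma_diag} to uncouple the quasi-BCH parity checks into $\ell$ independent Reed-Solomon conditions, one per eigenvalue. First I would fix $P \in \GL_{\ell}(\F_{q^{s'}})$ with $P^{-1}\Gamma P = \diag(\lambda_1,\dots,\lambda_\ell)$, each $\lambda_r$ a primitive $m$-th root of unity. Given $c = (c_1,\dots,c_m) \in \C$ I would assemble the matrix $C \in M_{\ell\times m}(\F_{q^{s'}})$ whose $j$-th column is $c_j^T$, and take as candidate the map $\psi' : c \mapsto P^{-1}C$. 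This is the natural object to study because the interleaved code of the statement lives in $M_{\ell\times m}$ with rows as constituent codewords and columns governing the block distance, and $\psi'$ turns a block of $c$ into a column of $\psi'(c)$.

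The heart of the argument is to rewrite Definition~\ref{defi:qbch}. Since $(\Gamma^i)^j = P\,\diag(\lambda_1^{ij},\dots,\lambda_\ell^{ij})\,P^{-1}$, left-multiplying the relation $\sum_{j=0}^{m-1}(\Gamma^i)^j c_{j+1}^T = 0$ by $P^{-1}$ turns it into $\sum_{j=0}^{m-1}\diag(\lambda_1^{ij},\dots,\lambda_\ell^{ij})\,(P^{-1}c_{j+1}^T) = 0$. Diagonality makes this split coordinatewise: writing $u_r$ for the $r$-th row of $P^{-1}C$, the conditions become $\sum_{j=0}^{m-1}\lambda_r^{ij}(u_r)_{j+1} = 0$ for $i = 1,\dots,\delta-1$ and each $r$. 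This says exactly that $u_r$ is orthogonal on the right to the vectors $(1,\lambda_r^i,\dots,\lambda_r^{(m-1)i})$, i.e.\ $u_r$ lies in the right dual of the LGRS code over $\F_{q^{s'}}$ with support $(1,\lambda_r,\dots,\lambda_r^{m-1})$ and dimension $\delta-1$. By Proposition~\ref{prop:dual} that dual is the RRS (hence Reed-Solomon) code $\C_r$ of dimension $m-\delta+1$. Thus every row of $\psi'(c)$ lands in a fixed Reed-Solomon code $\C_r$, so $\psi'(c)$ belongs to the interleaved code attached to $\C_1,\dots,\C_\ell$.

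It then remains to verify that $\psi'$ is an $\F_q$-linear isometric isomorphism onto a subcode. Linearity and injectivity are immediate, as $P^{-1}$ is invertible and the columns of $C$ recover $c$. For the isometry I would observe that the $j$-th column of $\psi'(c)$ is $P^{-1}c_j^T$, which vanishes if and only if $c_j$ does, $P^{-1}$ being nonsingular; hence the number of nonzero columns — precisely the $\ell$-block weight — is preserved, and applying this to differences $c - c'$ gives preservation of the $\ell$-block distance. Finally, the image is in general only a \emph{subcode} of the full interleaved code $\C_1\times\cdots\times\C_\ell$: the matrices $P^{-1}C$ are constrained by the requirement that $P$ times them have all entries in $\F_q$, which cuts out a proper $\F_q$-subspace.

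I expect the main obstacle to be keeping the fields straight. Conjugation by $P$ forces the constituent codes to live over $\F_{q^{s'}}$ rather than $\F_q$, so the claimed map is only $\F_q$-linear and its image is a subcode rather than the entire interleaved code; this is exactly why the theorem must say ``subcode.'' Making rigorous the assertion that $\psi'$ is an isometry for the $\ell$-block distance requires checking that the column-wise base change commutes correctly with the block (column) structure of the interleaving, and it is the invertibility of $P$ that does the decisive work there.
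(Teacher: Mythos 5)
Your proposal is correct and takes essentially the same approach as the paper: the paper's map $\sigma$ is exactly your $c \mapsto P^{-1}C$ written as multiplication by the block-diagonal matrix $\diag(P^{-1},\dots,P^{-1})$, and the paper likewise conjugates the parity-check matrix $H$ by $P$ to split the quasi-BCH conditions into $\ell$ scalar Vandermonde conditions in the eigenvalues $\lambda_i$, with the image being a subcode because of the $\F_q$-rationality constraint. Your explicit justification of the block-distance isometry and your appeal to Proposition~\ref{prop:dual} to identify each row code as a Reed--Solomon code merely fill in details the paper leaves as ``straightforward.''
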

\begin{proof}
  We take the notation of the proof of Lemma~\ref{lem:gamma_diag}.
  Recall that
  \begin{equation*}
    H = \begin{pmatrix}
      I_{\ell} & \Gamma         & \cdots & \Gamma^{m - 1} \\
      I_{\ell} & \Gamma^2       & \cdots & \Gamma^{2(m - 1)} \\
      \vdots   & \vdots         &        & \vdots \\
      I_{\ell} & \Gamma^{\delta - 1} & \cdots 
                                       & \Gamma^{(\delta - 1)(m - 1)}
    \end{pmatrix} \in M_{(\delta - 1)\ell,m\ell}(\F_{q^s})
  \end{equation*}
  is a parity check matrix for $\C$
  (proof of Theorem~\ref{thm:qbch_rs}).
  By Lemma~\ref{lem:gamma_diag} we have that
  \begin{multline*}
    \left(
      c_{11} , \dots , c_{1 \ell} , \dots ,
      c_{m1} , \dots , c_{m \ell}
    \right) \in \C
    \Longleftrightarrow \\
    \begin{pmatrix}
      P^{-1} &        & \\
             & \ddots & \\
             &        & P^{-1}
    \end{pmatrix}
    \begin{pmatrix}
      I_{\ell} & \Gamma         & \cdots & \Gamma^{m - 1} \\
      I_{\ell} & \Gamma^2       & \cdots & \Gamma^{2(m - 1)} \\
      \vdots   & \vdots         &        & \vdots \\
      I_{\ell} & \Gamma^{\delta - 1} & \cdots 
                                       & \Gamma^{(\delta - 1)(m - 1)}
    \end{pmatrix}
    \begin{pmatrix}
      P &        & \\
        & \ddots & \\
        &        & P
    \end{pmatrix} \times \\
    \left[
    \begin{pmatrix}
      P^{-1} &        & \\
             & \ddots & \\
             &        & P^{-1}
    \end{pmatrix}
    \begin{pmatrix}
      c_{11} \\ \vdots \\ c_{1 \ell} \\ \vdots \\
      c_{m1} \\ \vdots \\ c_{m \ell}
    \end{pmatrix} \right] = 0 \\
    \text{and } 
    \left( c_{11} , \dots , c_{1 \ell} , \dots ,
      c_{m1} , \dots , c_{m \ell} \right) \in \F_q^{m\ell}
  \end{multline*}
  Let
  \begin{equation}
  \label{equ:iso}
    \begin{pmatrix}
      v_{11} \\ \vdots \\ v_{1 \ell} \\ \vdots \\
      v_{m1} \\ \vdots \\ v_{m \ell}
    \end{pmatrix}
    =
    \begin{pmatrix}
      P^{-1} &        & \\
             & \ddots & \\
             &        & P^{-1}
    \end{pmatrix}
    \begin{pmatrix}
      c_{11} \\ \vdots \\ c_{1 \ell} \\ \vdots \\
      c_{m1} \\ \vdots \\ c_{m \ell}
    \end{pmatrix}
  \end{equation}
  Denote by $\sigma$ the application defined by
  equation~\eqref{equ:iso}. Then
  \begin{multline}
  \label{equ:bch}
    \left(
      c_{11} , \dots , c_{1 \ell} , \dots ,
      c_{m1} , \dots , c_{m \ell}
    \right) \in \C \Longleftrightarrow \\
    \sigma^{-1} \left(
      v_{11} , \dots , v_{1 \ell} , \dots ,
      v_{m1} , \dots , v_{m \ell}
    \right) \in \F_q^{m\ell}
    \text{ and for } i = 1,\dots,\ell \\
    \begin{pmatrix}
      1 & \lambda_i              & \dots & \lambda_i^{m - 1} \\
      1 & \lambda_i^2            & \dots & \lambda_i^{2(m - 1)} \\
      \hdotsfor{4} \\
      1 & \lambda_i^{\delta - 1} & \dots &
                                \lambda_i^{(\delta - 1)(m - 1)}
    \end{pmatrix}
    \begin{pmatrix}
      v_{1i} \\ \vdots \\ v_{mi}
    \end{pmatrix} = 0.
  \end{multline}
  Then it is straightforward that
  $\sigma$ is an isometric isomorphism from $\C$ equipped with
  the $\ell$-block distance and $\sigma(\C)$, which is by
  equation~\eqref{equ:bch} a subcode of
  the interleaved code with respect
  to $\ell$ subcodes of Reed-Solomon codes over $\F_q$.
  For $i = 1,\dots,\ell$ take $\C_i$ to be
  the Reed-Solomon code defined by the parity check matrix
  of equation~\eqref{equ:bch}.
\end{proof}

Note that if the minimal polynomial of $\Gamma$ has degree one:
$\Gamma = X - \lambda$, then $s' = s$ and $\Gamma$ diagonalizes as
$\lambda I_{\ell}$. Consequently the Reed-Solomon codes
$\C_1,\ldots, \C_{\ell}$ are isomorphic, as they are defined by
the same control equations in equation~\eqref{equ:bch}.
In such a case, we can apply the result on the correction capacity
for interleaved Reed-Solomon codes~\cite{SchSiBo2006,BleKiaYun2007}.



%
\begin{corollary}
  There exists a decoding algorithm that is guaranteed to correct up to
  $\frac{\delta - 1}{2}$ errors. In particular, if
  the minimal polynomial of $\Gamma$ has degree~$1$
    over $\F_{q^{s}}$ 
then it can correct up to $\frac{\ell}{\ell + 1}(\delta - 1)$
  errors with high probability.
\end{corollary}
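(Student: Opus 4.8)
The plan is to prove the two assertions separately, since each rests on one of the two representations of $\C$ developed above.

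The first assertion is immediate from the Welch--Berlekamp machinery of the previous subsection. By Proposition~\ref{prop:wb_correct}, Algorithm~\ref{al:wb} runs correctly and corrects every error of $\ell$-block weight at most $\tau = \floor{\frac{\delta - 1}{2}}$. This is a deterministic bound and needs no further argument, so the ``guaranteed'' half of the statement follows at once by exhibiting this algorithm.

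For the second assertion I would pass to the interleaved representation of Theorem~\ref{thm:interleaved}. When the minimal polynomial of $\Gamma$ has degree $1$ over $\F_{q^s}$, the remark following that theorem gives $s' = s$ and $\Gamma = \lambda I_\ell$, so the matrices $P$ may be taken to be the identity and the isometric isomorphism $\sigma$ identifies $\C$, equipped with the $\ell$-block distance, with a subcode of the $\ell$-fold interleaving of a \emph{single} Reed--Solomon code $\C_0$ of length $n = m$ and dimension $k = n - \delta + 1$, whence $n - k = \delta - 1$. Since all constituents $\C_1 = \dots = \C_\ell = \C_0$ coincide, this is exactly the regime treated by the collaborative decoders of~\cite{SchSiBo2006,BleKiaYun2007}. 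I would therefore take as the decoder the composite: apply $\sigma$ to the received word, run the collaborative interleaved decoder, and apply $\sigma^{-1}$ to its output. Writing the received word as $y = c + e$ with $c \in \C$ and $\blockw_\ell(e) \le \frac{\ell}{\ell + 1}(\delta - 1)$, the isometry preserves block weight, and the cited algorithm returns, with high probability, the unique interleaved codeword within block distance $\frac{\ell}{\ell + 1}(n - k) = \frac{\ell}{\ell + 1}(\delta - 1)$ of $\sigma(y)$, namely $\sigma(c)$; applying $\sigma^{-1}$ recovers $c$.

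The main obstacle — in fact essentially the only substantive point — is that Theorem~\ref{thm:interleaved} embeds $\C$ as a proper subcode of the interleaved code, whereas the decoders of~\cite{SchSiBo2006,BleKiaYun2007} are designed for the full interleaving of $\C_0$. I would argue that this mismatch is harmless: the transmitted word always lies in $\sigma(\C)$, and since $\sigma(c)$ is (with high probability over $e$) the unique interleaved codeword within the decoding radius, the decoder is forced to return it, so its output automatically lands in the subcode and no extra projection is needed. The probability of failure is then inherited verbatim from the collaborative decoder and is governed only by the degenerate error patterns it cannot handle, not by the subcode restriction.
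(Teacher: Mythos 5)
Your argument is correct, but for the first assertion you take a genuinely different route from the paper. The paper derives \emph{both} halves of the corollary from the interleaved representation of Theorem~\ref{thm:interleaved}: to get the guaranteed radius $\frac{\delta-1}{2}$ it applies $\sigma$ to the received word and decodes each of the $\ell$ rows independently with the classical decoder of the Reed--Solomon code $\C_i$ (each has redundancy $\delta-1$, and an error of $\ell$-block weight $t$ corrupts at most $t$ positions in every row), and only for the second half does it invoke the collaborative decoders of \cite{SchSiBo2006,BleKiaYun2007}. You instead obtain the guaranteed half from Algorithm~\ref{al:wb} and Proposition~\ref{prop:wb_correct}, i.e.\ from the matrix-ring Welch--Berlekamp machinery of the previous section; this is equally valid, and has the advantage of working directly over $\matfqs$ without passing to the splitting field $\F_{q^{s'}}$, whereas the paper's row-wise decoding keeps the whole corollary inside a single framework and makes the comparison between the two radii transparent. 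For the probabilistic half your argument coincides with the paper's; your extra remark about $\sigma(\C)$ being only a \emph{subcode} of the full interleaving is a point the paper passes over in silence, and your resolution (the transmitted word lies in the subcode, so a successful run of the collaborative decoder necessarily returns an element of it) is the right one --- though note that the success guarantee of \cite{BleKiaYun2007} is a statement about most error patterns of a given weight, not about uniqueness of the codeword within the radius $\frac{\ell}{\ell+1}(\delta-1)$, so your phrase ``the unique interleaved codeword within block distance $\frac{\ell}{\ell + 1}(\delta - 1)$ of $\sigma(y)$'' should be softened accordingly.
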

\begin{proof}
  Taking the notation of Theorem~\ref{thm:interleaved} and
  if $y = c + e$ is a received word, one can decode $\sigma(y)$
  with the decoding algorithms of $\C_1,\dots,\C_{\ell}$ obtaining
  $c' \in \F_{q^{s'}}^{m\ell}$. Then $c = \sigma^{-1}(c')$.

  If the minimal polynomial of $\Gamma$ has degree~$1$, then
  $\C_1 = \C_2 = \dots = \C_{\ell}$ and one can apply the
  algorithm of~\cite{BleKiaYun2007} or \cite{SchSiBo2006}.
\end{proof}

\bibliographystyle{alpha}
\bibliography{biblio}

\begin{thebibliography}{{M}c{E}78}

\bibitem[BCGO09]{BerCayGabOtm2009}
{T}. {B}erger, {P}.-{L}. {C}ayrel, {P}. {G}aborit, and {A}. {O}tmani.
\newblock {R}educing {K}ey {L}ength of the {{M}c{E}liece} {C}ryptosystem.
\newblock In {\em {P}roceedings of the 2nd {I}nternational {C}onference on
  {C}ryptology in {A}frica: {P}rogress in {C}ryptology}, {AFRICACRYPT} '09,
  pages 77--97, {B}erlin, {H}eidelberg, 2009. {S}pringer-{V}erlag.

\bibitem[BCQ12a]{BCQ2012}
{M}. {{B}arbier}, {C}. {{C}habot}, and {G}. {{Q}uintin}.
\newblock {O}n {G}eneralized {R}eed-{S}olomon {C}odes {O}ver {C}ommutative and
  {N}oncommutative {R}ings, 2012.

\bibitem[BCQ12b]{BCQ2011}
{M}. {B}arbier, {C}. {C}habot, and {G}. {Q}uintin.
\newblock {O}n quasi-cyclic codes as a generalization of cyclic codes.
\newblock {\em {F}inite {F}ields and {T}heir {A}pplications}, 18(5):904--919,
  2012.

\bibitem[BKY07]{BleKiaYun2007}
{D}. {B}leichenbacher, {A} {K}iayias, and {M}. {Y}ung.
\newblock {D}ecoding interleaved {R}eed-{S}olomon codes over noisy channels.
\newblock {\em Theoretical Computer Science}, 379(3):348--360, 2007.
\newblock {A}utomata, {L}anguages and {P}rogramming.

\bibitem[CCN10]{CayChaAbd2010}
{P}.-{L}. {C}ayrel, {C}. {C}habot, and {A}. {N}ecer.
\newblock {Q}uasi-cyclic codes as codes over rings of matrices.
\newblock {\em Finite Fields and Their Applications}, 16(2):100--115, 2010.

\bibitem[{C}ha11]{Chabot2011}
{C}. {C}habot.
\newblock {F}actorisation in {${M}_{\ell}(\mathbb{F}_q)[{X}]$}. {C}onstruction
  of quasi-cyclic codes.
\newblock In {\em {WCC} 2011 - {W}orkshop on coding and cryptography}, pages
  209--218, {P}aris, {F}rance, apr 2011.

\bibitem[FOPT10]{FauOtmPerTil2010}
{J}.-{C}. {F}aug\`ere, {A}. {O}tmani, {L}. {P}erret, and {J}.-{P}. {T}illich.
\newblock {A}lgebraic {C}ryptanalysis of {{M}c{E}liece} {V}ariants with
  {C}ompact {K}eys.
\newblock In {H}enri {G}ilbert, editor, {\em {A}dvances in {C}ryptology –
  {EUROCRYPT} 2010}, volume 6110 of {\em {L}ecture {N}otes in {C}omputer
  {S}cience}, pages 279--298. {S}pringer {B}erlin / {H}eidelberg, 2010.

\bibitem[{G}ra07]{codetables}
{M}arkus {G}rassl.
\newblock {B}ounds on the minimum distance of linear codes and quantum codes.
\newblock {O}nline available at http://www.codetables.de, 2007.
\newblock {A}ccessed on 2011-04-19.

\bibitem[LDW94]{LiDengWang94}
{Y}.~{X}. {L}i, {R}.~{H}. {D}eng, and {X}.~{M}. {W}ang.
\newblock {O}n the equivalence of {M}c{E}liece's and {N}iederreiter's
  public-key cryptosystems.
\newblock {\em IEEE Trans. Inform. Theory}, 40(1):271--273, January 1994.

\bibitem[LF01]{LalFitz2001}
{K}. {L}ally and {P}. {F}itzpatrick.
\newblock {A}lgebraic structure of quasicyclic codes.
\newblock {\em {D}iscrete {A}pplied {M}athematics}, 111(1--2):157--175, 2001.

\bibitem[LS01]{LinSol2001}
{S}. {L}ing and {P}. {S}ol\'e.
\newblock {O}n the algebraic structure of quasi-cyclic codes .{I}. {F}inite
  fields.
\newblock {\em IEEE Trans. Inform. Theory}, 47(7):2751--2760, nov 2001.

\bibitem[LS03]{LinSol2003}
{S}. {L}ing and {P}. {S}ol\'{e}.
\newblock {G}ood self-dual quasi-cyclic codes exist.
\newblock {\em IEEE Trans. Inform. Theory}, 49(4):1052--1053, april 2003.

\bibitem[{M}c{E}78]{McEli78}
{R}. {M}c{E}liece.
\newblock {A} {P}ublic-{K}ey {C}ryptosystem {B}ased {O}n {A}lgebraic {C}oding
  {T}heory.
\newblock {\em {D}eep {S}pace {N}etwork {P}rogress {R}eport}, 44:114--116,
  1978.

\bibitem[MS86]{SloMacWil86}
{F}.{J}. {M}ac{W}illiams and {N}.{J}.{A}. {S}loane.
\newblock {\em {T}he theory of error-correcting codes}.
\newblock {N}orth-{H}olland mathematical library. {N}orth-{H}olland, 1986.

\bibitem[{N}ie86]{Nied86}
{H}. {N}iederreiter.
\newblock {K}napsack-type cryptosystems and algebraic coding theory.
\newblock {\em {P}roblems of {C}ontrol and {I}nformation {T}heory},
  15(2):159--166, 1986.

\bibitem[SSB06]{SchSiBo2006}
{G}. {S}chmidt, {V}. {S}idorenko, and {M}. {B}ossert.
\newblock {E}rror and {E}rasure {C}orrection of {I}nterleaved
  {R}eed–{S}olomon {C}odes.
\newblock In {\O}yvind {Y}trehus, editor, {\em {C}oding and {C}ryptography},
  volume 3969 of {\em {L}ecture {N}otes in {C}omputer {S}cience}, pages 22--35.
  {S}pringer {B}erlin / {H}eidelberg, 2006.

\bibitem[UL10]{UmaLea2010}
{V}.~{G}. {U}ma{\~{n}}a and {G}. {L}eander.
\newblock {P}ractical {K}ey {R}ecovery {A}ttacks {O}n {T}wo {M}c{E}liece
  {V}ariants.
\newblock In {C}arlos {C}id and {J}ean-{C}harles {F}aug{\'{e}}re, editors, {\em
  {P}roceedings of the {S}econd {I}nternational {C}onference on {S}ymbolic
  {C}omputation and {C}ryptography}, pages 27--44, June 2010.

\end{thebibliography}
\end{document}